\chardef\bslash=`\\ 
\newtheorem[{}\it]{thm}{Theorem}[section]
\newtheorem{cor}[thm]{Corollary}
\newtheorem{lem}[thm]{Lemma}
\newtheorem{prop}[thm]{Proposition}
\theoremstyle{remark} 
\theoremstyle{definition}
\newtheorem{defn}{\textsc{Definition}}[section]
\newtheorem{rem}{Remark}[section]
\newtheorem*[{}\it]{notation}{Notation}
\newtheorem*[{}\it]{rest}{\textsc{Theorem}}
\newtheorem*[{}\it]{quest}{\textsc{Question}}
\newtheorem*[{}\it]{problemo}{\textsc{Problem}}
\newtheorem*[{}\it]{projone}{\texttt{Project 1 (Future Work)}}
\newtheorem*[{}\it]{projtwo}{\texttt{Project 2 (Current Work)}}
\newtheorem*[{}\it]{projthree}{\texttt{Project 3 (Current Work)}}
\newtheorem*[{}\it]{projfour}{\texttt{Project 4 (Future Work)}}
\newtheorem*[{}\it]{projfive}{\texttt{Project 5 (Future Work)}}
\newtheorem*[{}\it]{proofoflemma}{Proof of Lemma}
\title[]{On the Initial Boundary-Value Problem \\ in the Kinetic Theory of Hard Particles II: Non-uniqueness}
\author[]{Mark Wilkinson}\thanks{Department of Mathematics and Computer Science (MACS), Heriot-Watt University, and the Maxwell Institute for Mathematical Sciences, Edinburgh, Scotland (\Letter) \href{mailto:mark.wilkinson@hw.ac.uk}{mark.wilkinson@hw.ac.uk}}
\newcommand{\ov}{\overline}
\newcommand{\mres}{\mathbin{\vrule height 1.6ex depth 0pt width
0.13ex\vrule height 0.13ex depth 0pt width 1.3ex}}
\newcommand{\boundellipse}[3]
{(#1) ellipse (#2 and #3)
}
\begin{document}

\maketitle

\begin{abstract}
\noindent We prove that to each initial datum in a set of positive measure in phase space, there exist uncountably-many associated weak solutions of Newton's equations of motion which govern the dynamics of two non-spherical sets with real-analytic boundaries subject to the conservation of linear momentum, angular momentum and kinetic energy. We prove this result by first exhibiting non-uniqueness of classical solution to a constrained Monge-Amp\`{e}re equation posed on Euclidean space, and then applying the deep existence theory of \textsc{Ballard} for hard particle dynamics. In the final section of the article, we discuss the relevance of this observation to the kinetic theory of hard particle systems.
\end{abstract}
%

\vspace{3mm}

\section{\label{sec:level1}Introduction}
In \textsc{Saint-Raymond and Wilkinson} \cite{lsrmw}, a rigorous study of the Boltzmann equation governing a gas of non-spherical particles was initiated. If the underlying gas particles are each congruent to some compact, connected, strictly-convex set $\mathsf{P}_{\ast}\subset\mathbb{R}^{3}$ of unit mass whose boundary $\partial\mathsf{P}_{\ast}$ admits the structure of a manifold of class $\mathscr{C}^{1}$, in the case of {\em linear scattering} the Boltzmann equation for the unknown 1-particle distribution function $f=f(z, t)$ reads as
\begin{equation}\label{beebop}
\frac{\partial f}{\partial t}+\{f, H\}=\mathcal{C}[f, f],
\end{equation}
where $z=(x, R, v, \omega)\in\mathcal{M}_{3}:=\mathbb{R}^{3}\times\mathrm{SO}(3)\times\mathbb{R}^{3}\times\mathbb{R}^{3}$ is the 1-particle phase vector, $\{\cdot, \cdot\}$ denotes the classical Poisson bracket, and $\mathcal{C}$ is the quadratic collision operator defined pointwise by
\begin{equation*}
(\mathcal{C}[f, f])(z, t):=\int_{\mathrm{SO}(3)}\int_{\mathbb{R}^{3}}\int_{\mathbb{R}^{3}}\int_{\mathbb{S}^{2}}b_{\beta}(v, \ov{v}, \omega, \ov{\omega})(f_{\beta}'\ov{f}_{\beta}'-f\ov{f})\,dS(n)\ov{v}d\ov{\omega}d\mu(\ov{R}),
\end{equation*}
with $b_{\beta}:\mathbb{R}^{12}\rightarrow\mathbb{R}$ denoting the scattering cross section defined for each spatial parameter $\beta=(R, \ov{R}, n)\in\mathrm{SO}(3)\times\mathrm{SO}(3)\times\mathbb{S}^{2}$, $S$ denoting the normalised measure on $\mathbb{S}^{2}$, and $\mu$ denoting the Haar measure on $\mathrm{SO}(3)$. The values of the distribution function are expressed as
\begin{equation*}
\begin{array}{c}
f_{\beta}':=f(x, R, v_{\beta}', \omega_{\beta}', t), \quad \ov{f}_{\beta}':=f(x, \ov{R}, \ov{v}_{\beta}', \ov{\omega}_{\beta}', t), \vspace{2mm}\\
f:=f(x, R, v, \omega, t), \quad \ov{f}:=f(x, \ov{R}, \ov{v}, \ov{\omega}, t),
\end{array}
\end{equation*}
and the values of the post-collisional velocities are expressed by means of a {\em scattering matrix} $\sigma_{\beta}\in \mathrm{O}(12)$ acting on pre-collisional velocities, namely 
\begin{equation*}
[v_{\beta}', \ov{v}_{\beta}', \omega_{\beta}', \ov{\omega}_{\beta}']:=\sigma_{\beta}[v, \ov{v}, \omega, \ov{\omega}].
\end{equation*} 
Moreover, $H:\mathcal{M}_{3}\rightarrow\mathbb{R}$ is the time-independent 1-particle Hamiltonian given by
\begin{equation*}
H(z):=\frac{1}{2}|v|^{2}+\frac{1}{2}RJR^{T}\omega\cdot \omega,
\end{equation*} 
and $J\in\mathrm{GL}(3)$ is the inertia tensor associated to the reference set $\mathsf{P}_{\ast}$, i.e.
\begin{equation*}
J:=\int_{\mathbb{R}^{3}}\left(I-y\otimes y\right)\mathds{1}_{\mathsf{P}_{\ast}}(y)\,dy.
\end{equation*}
Although a `natural' family of scattering matrices in \textsc{Saint-Raymond and Wilkinson} \cite{lsrmw} was studied, the question of  possible {\em non-uniqueness} of global-in-time physical weak solutions of the initial boundary-value problem (IBVP) on phase space $\mathscr{D}_{N}(\mathsf{P}_{\ast})\subseteq \mathscr{P}_{N}(\mathsf{P}_{\ast})\times\mathbb{R}^{6N}$ associated to Newton's equations of motion
\begin{equation}\label{nemo}
\left\{
\begin{array}{ll}
\displaystyle \frac{dx_{i}}{dt}=v_{i}, & \quad \displaystyle\frac{dR_{i}}{dt}=\Omega_{i}R_{i}, \vspace{2mm}\\
\displaystyle \frac{dv_{i}}{dt}=0, & \quad \displaystyle\frac{d\omega_{i}}{dt}=0,
\end{array}
\right. \tag{NEM}
\end{equation}
was not investigated in full. In the above, $i=1, ..., N$ and $\mathscr{P}_{N}(\mathsf{P}_{\ast})$ denotes the $N$-particle table consisting of all admissible translations and rotations of the reference particle $\mathsf{P}_{\ast}$ given by
\begin{equation*}
\mathscr{P}_{N}(\mathsf{P}_{\ast}):=\left\{
\{(x_{j}, R_{j})\}_{j=1}^{N}\in (\mathbb{R}^{3}\times\mathrm{SO}(3))^{N}\,:\,\begin{array}{c}
\mathrm{card}\,(R_{i}\mathsf{P}_{\ast}+x_{i})\cap(R_{j}\mathsf{P}_{\ast}+x_{j})\leq 1 \\
\mathrm{for}\hspace{2mm}i\neq j
\end{array}
\right\}.
\end{equation*}
Rather, in that article, the authors focussed on questions pertinent at the level of {\em collision invariants} for the Boltzmann equation \eqref{beebop}, not those pertinent at the level of Newton's equations. This work addresses the issue of non-uniqueness of the collision operator for compact, strictly-convex hard particles.

In \textsc{Wilkinson} \cite{mw111}, it was shown that for initial data in a certain codimension 1 subset of phase space $\mathscr{D}_{N}(\mathsf{P}_{\ast})$, one cannot even construct a local-in-time weak solution of this IBVP equipped with any so-called frictionless scattering on the spatial boundary of $N$-particle phase space. This is due to a kind of `loss of convexity' at certain points of $\partial\mathscr{P}_{N}(\mathsf{P}_{\ast})$ when $\mathsf{P}_{\ast}$ does not admit spherical symmetry. Of course, if one is content to study only {\em average} properties of weak solutions to \eqref{nemo} (e.g. as is only required in the construction of weak solutions to the associated BBGKY hierarchy, c.f. \textsc{Lanford} \cite{lanford}), then the results of identifies a phenomenon which is essentially `undetected' by integrals built with respect to the restriction measure $((\mathscr{L}_{3N}\otimes \mu_{N})\mres \mathscr{P}_{N}(\mathsf{P}_{\ast}))\otimes\mathscr{L}_{6N}$, where $\mathscr{L}_{K}$ denotes the Lebesgue measure on $\mathbb{R}^{K}$ and $\mu_{N}$ denotes the Haar measure on the product group $\mathrm{SO}(3)^{N}$. 

In this article, we prove a non-uniqueness result for global-in-time physical weak solutions of the IBVP associated to \eqref{nemo}. More precisely, we prove that to each initial datum in a set of positive measure in phase space there exist {\em uncountably-many} global-in-time weak solutions which conserve the total linear momentum, angular momentum and kinetic energy of the initial datum. We contend that this result {\em is} relevant from the point of view of the theory of the Boltzmann equation, as it is {\em a priori} unclear if each these scattering families all admit the same collision invariants. We discuss this matter in greater detail in section \ref{discuss} below.
\subsection{On the Strategy of the Paper}
In order to establish our main non-uniqueness result, we study families of flow operators on phase space $\mathscr{D}_{N}(\mathsf{P}_{\ast})$. Indeed, by first assuming some natural regularity, stability and measure-theoretic criteria on (presumably-existing and global-in-time) weak solutions of Newton's equations, we derive in turn a suite of necessary conditions they -- and the associated {\em scattering maps} to which they give rise -- satisfy. By employing the important general existence theory of \textsc{Ballard} \cite{ball}, we are in turn able to demonstrate that not only (i) more than 1 scattering map satisfies the derived necessary conditions, but (ii) that each of these scattering maps gives rise to a distinct global-in-time weak solution of Newton's equations on $\mathbb{R}$ subject to the same initial datum.

For notational simplicity, we study systems of gas particles modelled by compact, connected subsets of $\mathbb{R}^{2}$, however all our results admit a straightforward extension to the case of three-dimensional particles. We prove that to each (what we term in this work) {\em physical regular flow} $\{T_{t}\}_{t\in\mathbb{R}}$ there is a unique family of {\em canonical scattering maps} $\{\sigma_{\beta}\}_{\beta}$ defined on velocity space, each of whose members $\sigma_{\beta}:=\nabla S_{\beta}$ is a classical solution of either an elliptic or hyperbolic Monge-Amp\`{e}re equation of the type
\begin{equation*}
\mathrm{det}\,D^{2}S_{\beta}=\pm1 \quad \text{on}\hspace{2mm}\mathbb{R}^{6}\hspace{2mm}\text{for}\hspace{2mm}\beta\in\mathbb{T}^{3}.
\end{equation*}
We prove that this equation -- whose solutions are subject to additional algebraic constraints coming from the conservation laws -- admits at least 2 classical solutions for fixed $\beta$ when $\nabla S_{\beta}$ is an orientation-reversing map on $\mathbb{R}^{6}$. In addition, we prove the same equation admits uncountably-many classical solutions -- parameterised by elements of the real projective line $\mathbb{RP}^{1}$ -- for fixed $\beta$ when $\nabla S_{\beta}$ is an orientation-preserving map on $\mathbb{R}^{6}$. Finally, by employing the general existence theory of \textsc{Ballard} \cite{ball} for a system of 2 real-analytic sets, we construct uncountably-many distinct one-parameter groups of solution operators $\mathsf{T}_{\phi}:=\{T(t; \phi)\}_{t\in\mathbb{R}}$ on rigid set phase space, where $\phi\in C^{0}(\mathbb{T}^{2}, \mathbb{RP}^{1})$, with the property that for each initial datum $Z_{0}$ in a set of full measure, the trajectories on $\mathbb{R}$
\begin{equation*}
t\mapsto T(t; \psi)Z_{0}
\end{equation*}
are distinct weak solutions of \eqref{nemo} for every choice $\phi\in C(\mathbb{T}^{2}, \mathbb{RP}^{1})$. In other words, our class 
\begin{equation*}
\{\mathsf{T}_{\phi}\,:\,\phi\in C^{0}(\mathbb{T}^{2}, \mathbb{RP}^{1})\}
\end{equation*}
of solution operator families $\mathsf{T}_{\phi}$ is parameterised by a family of continuous line fields on the 2-torus. 
\subsection{Non-uniqueness of Solutions to Differential Equations of Classical Physics}
We remark that our main result (but certainly not the proof thereof) is reminiscent of those in the recent and growing body of work focussing on possible non-uniqueness of weak solutions to equations of classical physics. An archetype of this kind of result is that of \textsc{De Lellis and Sz\'{e}kelyhidi} \cite{delellis1, delellis2} on the incompressible Euler equations of hydrodynamics. We recall for the reader a basic definition.
\begin{defn}[Weak Solution of the Incompressible Euler Equations on $\mathbb{R}^{d}$]
Suppose $d=2, 3$. We say that a time-dependent solenoidal vector field $u\in L^{2}_{\mathrm{loc}}(\mathbb{R}^{d}\times \mathbb{R}, \mathbb{R}^{d})$ is a {\em weak solution} of the incompressible Euler equations on $\mathbb{R}^{d}$ ,
\begin{equation}\label{incompeuler}
\left\{
\begin{array}{l}
\partial_{t}u+\mathrm{div}_{x}(u\otimes u)+\nabla p=0, \vspace{2mm}\\
\nabla_{x}\cdot u =0,
\end{array}
\right.
\end{equation}
if and only if $u$ satisfies the equality
\begin{equation*}
\int_{-\infty}^{\infty}\int_{\mathbb{R}^{d}}\left(u\cdot\partial_{t}\varphi+u\otimes u:\nabla_{x}\varphi\right)\,dxdt=0
\end{equation*}
for every divergence-free test vector field $\varphi\in C^{\infty}_{c}(\mathbb{R}^{d}\times\mathbb{R}, \mathbb{R}^{d})$.
\end{defn}
Building upon and extending the original pioneering work of \textsc{Scheffer} \cite{scheffer} (and also the work of \textsc{Shnirelman} \cite{shnir}), the authors established an elementary proof of the following non-uniqueness result for weak solutions:
\begin{thm}\label{weirdflow}
There exist $u\in L^{\infty}(\mathbb{R}^{d}\times\mathbb{R}, \mathbb{R}^{d})$ and $p\in L^{\infty}(\mathbb{R}^{d}\times\mathbb{R}, \mathbb{R})$ such that $u$ is a non-zero weak solution of \eqref{incompeuler} with the property that $\mathrm{supp}\,u$ and $\mathrm{supp}\,p$ are compact in $\mathbb{R}^{d}\times\mathbb{R}$.
\end{thm}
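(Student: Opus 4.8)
The plan is to follow the Tartar--De Lellis--Sz\'{e}kelyhidi framework and recast \eqref{incompeuler} as a differential inclusion. First I would introduce the symmetric traceless matrix field $u := v\otimes v - \tfrac{1}{d}|v|^{2}I$ (renaming the velocity $v$ to avoid a clash of symbols) together with the modified pressure $q := p + \tfrac{1}{d}|v|^{2}$. With these unknowns the Euler system splits into a constant-coefficient \emph{linear} system,
\[
\partial_{t}v + \mathrm{div}_{x}u + \nabla_{x}q = 0, \qquad \mathrm{div}_{x}v = 0,
\]
supplemented by the pointwise \emph{nonlinear}, nonconvex constraint
\[
u = v\otimes v - \tfrac{1}{d}|v|^{2}I, \qquad |v|^{2} = R^{2},
\]
for a fixed speed $R>0$. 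Any triple $(v,u,q)\in L^{\infty}$ solving the linear system with $(v(x,t),u(x,t))$ lying in the constraint set $K := \{(v,u) : u = v\otimes v - \tfrac{1}{d}|v|^{2}I,\ |v|=R\}$ for a.e. $(x,t)$ then furnishes a weak solution of \eqref{incompeuler}, with $p = q - \tfrac{1}{d}|v|^{2}\in L^{\infty}$ recovered at the end.

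Next I would carry out the plane-wave analysis of the linear operator, identifying its wave cone $\Lambda$, namely those states $(\ov{v},\ov{u},\ov{q})$ admitting one-dimensional oscillatory solutions proportional to $h(\xi\cdot x + \tau t)$ for some $(\xi,\tau)\neq 0$. The heart of the argument is the geometric computation of the $\Lambda$-convex hull of $K$: I would show that the convex function $e(v,u) := \tfrac{d}{2}\lambda_{\max}(v\otimes v - u)$, built from the largest eigenvalue of $v\otimes v - u$, controls this hull, and that its open sublevel set
\[
U := \mathrm{int}\,\{(v,u) : e(v,u) < \tfrac{R^{2}}{2}\}
\]
is a bounded open set whose closure meets $K$ precisely along $\{e = R^{2}/2\}$; equivalently, from every point of $U$ there issue large-amplitude admissible $\Lambda$-directions pointing toward $K$. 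This is exactly the structure that powers convex integration.

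With the geometry in hand, I would set up the space $X_{0}$ of \emph{subsolutions}: compactly-supported $L^{\infty}$ triples solving the linear system with $(v,u)\in U$ everywhere, and I would take its closure $X$ in the weak-$\ast$ topology, which is compact and metrizable on a fixed domain. The crucial preliminary point is to exhibit a \emph{nonzero} element of $X_{0}$ with compact support; this is produced by adding a single localized plane-wave perturbation to the trivial subsolution $0\in X_{0}$, smoothly cut off in space-time using the potentials (gauge freedom) of the linear operator so that the linear constraint is preserved exactly. I would then run the Baire-category machinery of De Lellis--Sz\'{e}kelyhidi: the energy functional $(v,u)\mapsto\int|v|^{2}$ is weak-$\ast$ lower semicontinuous on $X$, hence has a residual set of continuity points; and the perturbation property attached to $U$ forces every continuity point to satisfy $|v|^{2}=R^{2}$ a.e., so that $(v,u)\in K$ and an exact solution results. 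Since perturbations are inserted only \emph{inside} the support of a given subsolution, compact support propagates through the construction and the limiting solution is nonzero.

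The main obstacle is the geometric hull computation --- verifying both that $U$ is genuinely nonempty and that $K$ lies in its $\Lambda$-convex hull in the quantitative sense needed to add localized plane waves of definite amplitude at each stage. This rests on a delicate linear-algebraic study of which near-rank-one directions $(\ov{v},\ov{u})$ are admissible in $\Lambda$, combined with the convexity of $\lambda_{\max}$. A secondary difficulty, essential for the conclusion of the theorem, is guaranteeing that the perturbations can be truly \emph{localized} in space-time while still solving the linear system exactly; handling this cleanly is what ultimately delivers the compactness of $\mathrm{supp}\,u$ and $\mathrm{supp}\,p$.
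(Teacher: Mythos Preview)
Your sketch is a faithful outline of the De Lellis--Sz\'{e}kelyhidi convex-integration argument and is correct in its essentials. Note, however, that the paper does not supply its own proof of this theorem at all: it is quoted purely as background from \cite{delellis1, delellis2}, with the proof left entirely to those references, so there is nothing in the paper to compare your proposal against beyond the citation itself.
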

Theorem \ref{weirdflow} is manifestly a non-uniqueness result, as the trivial vector field $u\equiv 0$ on $\mathbb{R}^{d}\times\mathbb{R}$ is also a weak solution of \eqref{incompeuler}. For us, this result admits the following interpretation: {\em the laws of classical physics (together with, perhaps, constitutive relations) are not guaranteed to determine uniqueness of their associated equations of motion}. From the mathematician's viewpoint, it may be that one must supplement PDE of the type \eqref{incompeuler} with more stringent {\em analytical} conditions or so-called {\em selection principles} in the hope of securing a statement on the uniqueness of solutions thereof. In this vein, one thinks of the admissibility criterion of \textsc{Lax} \cite{lax1971shock, lax1973hyperbolic} in the setting of hyperbolic conservation laws.

As we have claimed above in the introduction, in this article we establish the non-uniqueness of {\em physical weak solutions} to the governing equations of rigid body dynamics; see section \ref{mainres} below for our definition of physical weak solution of system \eqref{nemo} in the case $N=2$. We shall argue in the final section of this work it seems quite hopeless that one can find a selection criterion which singles out any one of these solutions in a mathematically- or physically-natural way. Instead, with the Boltzmann-Grad limit of system \eqref{nemo} as $N\rightarrow\infty$ in mind, we believe it to be more appropriate to show -- in a precise sense -- that the average dynamics of \eqref{nemo} is independent of one's choice of scattering. This remark articulates an open problem concerning the extension of the work of \textsc{Saint-Raymond and Wilkinson} \cite{lsrmw} to general families of (linear) scattering maps. This work will not be tackled here.
\subsection{Notation}\label{notation}
In what follows, we work only with two congruent particles in space $\mathbb{R}^{2}$ (i.e. $N=2$ in \eqref{nemo} above). The phase space $\mathscr{D}_{2}(\mathsf{P}_{\ast})$ is the fibre bundle 
\begin{equation*}
\mathscr{D}_{2}(\mathsf{P}_{\ast}):=\bigsqcup_{X\in\mathscr{P}_{2}(\mathsf{P}_{\ast})}A_{X},
\end{equation*}
where
\begin{equation*}
A_{X}=\left\{
\begin{array}{ll}
\displaystyle \mathbb{R}^{6} & \quad \text{if}\hspace{2mm}X\in\mathrm{int}\,\mathscr{P}_{2}(\mathsf{P}_{\ast}), \vspace{2mm}\\
\displaystyle \Sigma_{\beta(X)}^{-} & \quad \text{if}\hspace{2mm}X\in\partial\mathscr{P}_{2}(\mathsf{P}_{\ast}),
\end{array}
\right.
\end{equation*}
and $\Sigma_{\beta(X)}^{-}$ are half-spaces of velocity vectors in $\mathbb{R}^{6}$ derived in section \ref{scat} below. We subsequently denote by $M\in\mathbb{R}^{6\times 6}$ the mass-inertia matrix associated to 2 congruent sets given by 
\begin{equation}\label{mi}
M:=\mathrm{diag}(\sqrt{m}, \sqrt{m}, \sqrt{m}, \sqrt{m}, \sqrt{J}, \sqrt{J}),
\end{equation}
where
\begin{equation*}
m:=\int_{\mathsf{P}_{\ast}}\,dy \quad \text{and}\quad J:=\int_{\mathsf{P}_{\ast}}|y|^{2}\,dy.
\end{equation*}
We shall employ square brackets to denote the concatenation of scalars and vectors into a single vector, e.g. $X=[x, \ov{x}, \vartheta, \ov{\vartheta}]\in\mathbb{R}^{6}$ for $x, \ov{x}\in \mathbb{R}^{2}$ and $\vartheta, \ov{\vartheta}\in\mathbb{R}$. On the other hand, round brackets denote concatenation of scalars alone. We write $\Pi_{1}:\mathscr{D}_{2}(\mathsf{P}_{\ast})\rightarrow\mathscr{P}_{2}(\mathsf{P}_{\ast})$ to denote the spatial projector
\begin{equation}
\Pi_{1}Z:=X\quad \text{for}\hspace{2mm}Z=[X, V]\in\mathscr{D}_{2}(\mathsf{P}_{2}),
\end{equation}
and write $\Pi_{2}:\mathscr{D}_{2}(\mathsf{P}_{\ast})\rightarrow\mathbb{R}^{6}$ to denote the velocity projector
\begin{equation}
\Pi_{2}Z:=V\quad \text{for}\hspace{2mm}Z=[X, V]\in\mathscr{D}_{2}(\mathsf{P}_{2}).
\end{equation}
Finally, we write $\mathsf{P}(t):=R(\vartheta(t))\mathsf{P}_{\ast}+x(t)$ and $\ov{\mathsf{P}}(t):=R(\ov{\vartheta}(t))\mathsf{P}_{\ast}+\ov{x}(t)$ to denote the evolution of the sets congruent to $\mathsf{P}_{\ast}$ that are governed by the abstract phase map $t\mapsto X(t)$.
\subsection{Main Result}\label{mainres}
We hereby present the precise definition of physical weak solution of \eqref{nemo} to be employed in all the sequel. In all that follows, we study the motion of $N=2$ hard particles. By a suitable modification, one can establish analogous results for $N$-particle problems on $\mathscr{D}_{N}(\mathsf{P}_{\ast})$.
\begin{defn}[Global-in-time Physical Weak Solutions]\label{physweak}
Suppose $\mathsf{P}_{\ast}\subset\mathbb{R}^{2}$ is a compact, strictly-convex set whose boundary $\partial\mathsf{P}_{\ast}$ is a real-analytic manifold. We say that $X\in C^{0}(\mathbb{R}, \mathscr{P}_{2}(\mathsf{P}_{\ast}))$ is a {\bf global-in-time physical weak solution} of 
\begin{equation}\label{newty}
M\ddot{X}=0
\end{equation}
with initial state $Z_{0}:=[X_{0}, V_{0}]\in\mathscr{P}_{2}(\mathsf{P}_{\ast})$ if and only if $\dot{X}\in\mathrm{BV}_{\mathrm{loc}}(\mathbb{R}, \mathbb{R}^{6})$ and
\begin{equation*}
\int_{\mathbb{R}}X(t)\cdot\phi''(t)\,dt=\int_{\mathbb{R}}\phi(t)\,d\mu(t)
\end{equation*}
for all $\phi\in C^{\infty}_{c}(\mathbb{R}, \mathbb{R}^{6})$, where $\mu$ is a $\mathbb{R}^{6}$-valued Radon measure on $\mathbb{R}$ with the support property that
\begin{equation}\label{suppy}
\mathrm{supp}\,\mu\hspace{2mm}\left\{
\begin{array}{ll}
=\varnothing & \quad \text{if}\hspace{2mm}\mathsf{P}(t)\cap\ov{\mathsf{P}}(t)\neq \varnothing\hspace{2mm}\text{on any open non-empty subset of}\hspace{1mm}\mathbb{R}, \\
\text{is finite} & \quad \text{otherwise}.
\end{array}
\right.
\end{equation}
Moreover, $X$ and its distributional derivative $\dot{X}$ respect the conservation of linear momentum
\begin{equation*}
mv(t)+m\ov{v}(t)=mv_{0}+m\ov{v}_{0},
\end{equation*}
the conservation of angular momentum
\begin{equation*}
\begin{array}{c}
-m(a-x(t))^{\perp}\cdot v(t)+J\omega(t)-m(a-\ov{x}(t))^{\perp}\cdot \ov{v}(t)\\
=-m(a-x_{0})^{\perp}\cdot v_{0}+J\omega_{0}-m(a-\ov{x}_{0})^{\perp}\cdot \ov{v}_{0}
\end{array}
\end{equation*}
for any $a\in\mathbb{R}^{2}$, and the conservation of kinetic energy
\begin{equation*}
\begin{array}{c}
m|v(t)|^{2}+J\omega(t)^{2}+m|\ov{v}(t)|^{2}+J\ov{\omega}(t)^{2}=m|v_{0}|^{2}+J\omega_{0}^{2}+m|\ov{v}_{0}|^{2}+J\ov{\omega}_{0}^{2}
\end{array}
\end{equation*}
for all $t\in\mathbb{R}$, where $t\mapsto V(t):=[v(t), \ov{v}(t), \omega(t), \ov{\omega}(t)]$ denotes the unique lower semi-continuous representative of the equivalence class $\dot{X}\in\mathrm{BV}_{\mathrm{loc}}(\mathbb{R}, \mathbb{R}^{6})$. Finally, $X(0)=X_{0}$ and $V_{0}\in \dot{X}$.
\end{defn}
\begin{rem}
The support condition \eqref{suppy} in the definition of solution is included so as to rule out the possibility of {\em rolling solutions} of Newton's equations in this context. In addition, the assumption that the support of $\mu$ be finite is motivated by the result of \textsc{Ballard} (\cite{ball}, Proposition 19) that ``kinetic energy-conserving solutions admit only finitely-many collisions on compact time intervals''.
\end{rem}
We are now ready to state the main result of this article.
\begin{thm}\label{mainrez}
Suppose $\mathsf{P}_{\ast}\subset\mathbb{R}^{2}$ is a compact, strictly-convex set with real-analytic boundary curve $\partial\mathsf{P}_{\ast}$. There exists a set $\mathcal{A}\subset\mathscr{D}_{2}(\mathsf{P}_{\ast})$ of initial data of positive measure with the following property: for all $Z_{0}\in \mathcal{A}$, there exist uncountably-many distinct global-in-time weak solutions of \eqref{newty}.
\end{thm}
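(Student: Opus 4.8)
The plan is to reduce the construction of global-in-time physical weak solutions to the analysis of the admissible velocity jumps at a single collision, and then to reassemble a continuum of genuinely distinct solutions via the existence theory of \textsc{Ballard}. First I would record that any $X$ satisfying Definition \ref{physweak} is necessarily affine in $t$ on each maximal open interval on which $\mathsf{P}(t)$ and $\ov{\mathsf{P}}(t)$ do not touch, since there $\mu$ has empty support and $M\ddot{X}=0$ holds classically; the support/finiteness condition \eqref{suppy} then guarantees that, on any compact time interval, $X$ is piecewise-affine with only finitely many velocity jumps, each supported at an instant $t_{\ast}$ at which $\mathsf{P}(t_{\ast})\cap\ov{\mathsf{P}}(t_{\ast})$ is a single point. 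Thus specifying a weak solution amounts to specifying, at each collision configuration $\beta$, a scattering map $\sigma_{\beta}\colon V^{-}\mapsto V^{+}$ on velocity space $\mathbb{R}^{6}$ relating incoming and outgoing velocities, subject to the three conservation laws.

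Next I would characterise the admissible $\sigma_{\beta}$ algebraically. Passing to mass-inertia coordinates $W:=MV$, conservation of kinetic energy is exactly the statement that $\Sigma_{\beta}:=M\sigma_{\beta}M^{-1}$ is an $M$-isometry; since $\sigma_{\beta}$ is required to be a canonical gradient map $\nabla S_{\beta}$, rigidity of isometries forces it to be affine and hence $\Sigma_{\beta}\in O(6)$, with $\det D^{2}S_{\beta}=\det\Sigma_{\beta}=\pm 1$, which is the constrained Monge-Amp\`{e}re equation advertised in the introduction. Conservation of linear and angular momentum are three independent linear conditions on $W^{+}=\Sigma_{\beta}W^{-}$, equivalent to $\Sigma_{\beta}$ fixing pointwise the three-dimensional subspace $\mathcal{F}_{\beta}\subset\mathbb{R}^{6}$ spanned by the gradients of the conserved functionals; together with the requirement that a genuine collision reverse the relative normal velocity, $\Sigma_{\beta}$ acts as $-1$ on the contact-normal direction $\nu_{\beta}\in\mathcal{F}_{\beta}^{\perp}$ and is thereby reduced to an element of $O(2)$ on the residual $2$-plane $\mathcal{G}_{\beta}:=\mathcal{F}_{\beta}^{\perp}\cap\nu_{\beta}^{\perp}$. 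Reading off the orientation bookkeeping, orientation-preserving $\Sigma_{\beta}$ correspond to reflections of $\mathcal{G}_{\beta}$, parameterised by their axis, which I would identify with $\mathbb{RP}^{1}$, while in the orientation-reversing sector the residual algebraic constraints cut the admissible set down to a finite collection of cardinality at least two (including the canonical frictionless scattering). Establishing that this constraint variety genuinely carries a whole $\mathbb{RP}^{1}$ of distinct solutions, rather than collapsing to a point, is the crux of the argument and, I expect, the main obstacle.

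With the pointwise non-uniqueness in hand, I would globalise. For each continuous line field $\phi\in C^{0}(\mathbb{T}^{2},\mathbb{RP}^{1})$ I would select the corresponding orientation-preserving scattering map $\sigma_{\beta}^{\phi(\beta)}$ at every collision configuration $\beta\in\mathbb{T}^{2}$, obtaining an admissible scattering family depending continuously on $\phi$. Invoking Ballard's existence theorem \cite{ball} for two compact sets with real-analytic boundary, whose real-analyticity is precisely what forces local finiteness of collision times and makes the associated impact problem well-posed, I would assemble from each such family a one-parameter group $\mathsf{T}_{\phi}=\{T(t;\phi)\}_{t\in\mathbb{R}}$ on $\mathscr{D}_{2}(\mathsf{P}_{\ast})$ whose trajectories are global-in-time physical weak solutions of \eqref{newty}.

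Finally I would produce $\mathcal{A}$ and prove distinctness. Let $\mathcal{A}\subset\mathscr{D}_{2}(\mathsf{P}_{\ast})$ be the set of initial data whose free-flight trajectory leads to a first genuine single-point collision at some finite time; this is an open condition, since two suitably approaching particles must eventually touch, so $\mathcal{A}$ has positive measure. For $Z_{0}\in\mathcal{A}$ the trajectories $t\mapsto T(t;\phi)Z_{0}$ already differ at this first collision whenever two line fields differ at the corresponding $\beta$, because distinct elements of $\mathbb{RP}^{1}$ yield distinct post-collisional velocities $V^{+}$; since $\mathbb{RP}^{1}$, and a fortiori $C^{0}(\mathbb{T}^{2},\mathbb{RP}^{1})$, is uncountable, there are uncountably many distinct weak solutions issuing from each $Z_{0}\in\mathcal{A}$, which is the assertion of Theorem \ref{mainrez}. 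The remaining difficulties beyond the constraint analysis are bookkeeping ones, namely checking that the chosen scattering at each $\beta$ satisfies all hypotheses of Ballard's theorem simultaneously and that the resulting solutions stay distinct for all time rather than merging; both I expect to follow from the real-analyticity of $\partial\mathsf{P}_{\ast}$ and the finiteness of collisions on compact intervals.
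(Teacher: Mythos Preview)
Your proposal is correct and mirrors the paper's own argument: reduce to linear scattering in mass--inertia coordinates, use the spectral decomposition of a self-adjoint element of $\mathrm{O}(6)$ to identify the $\mathbb{RP}^{1}$-family of involutive maps acting as a reflection on the residual $2$-plane $\mathrm{span}\{\widehat{E}_{1},\widehat{E}_{2},\widehat{E}_{\beta},\widehat{\nu}_{\beta}\}^{\perp}$ in the orientation-preserving case (and the two isolated solutions in the orientation-reversing case), globalise via $\phi\in C^{0}(\mathbb{T}^{2},\mathbb{RP}^{1})$, and then invoke Ballard's theorem on the positive-measure set of initial data whose free trajectory meets $\partial\mathscr{P}_{2}(\mathsf{P}_{\ast})$. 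The only difference is one of framing: the paper simply \emph{postulates} the quadratic ansatz $S_{\beta}(V)=\tfrac{1}{2}V\cdot(M^{-1}A_{\beta}M)V$ rather than arguing, as you do, that linearity is \emph{forced} by isometry rigidity plus the gradient structure---and since for non-uniqueness one only needs to \emph{exhibit} uncountably many scattering families rather than classify them all, the ansatz suffices and your rigidity step (which is delicate on a half-space in any case) is not needed.
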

We claim that this result has significant implications for the Boltzmann equation for non-spherical particles. We discuss this in more detail in section \ref{discuss} below. We also claim (without proof) that Theorem \ref{mainrez} may extended in a straightforward manner to the case of compact, strictly-convex subsets of $\mathbb{R}^{3}$.
\subsection{Structure of the Paper}
In section \ref{odeder}, we introduce the notion of regular flow on phase space $\mathscr{D}_{2}(\mathsf{P}_{\ast})$. In section \ref{scat}, we define scattering maps that are associated to a given regular flow on $\mathscr{D}_{2}(\mathsf{P}_{\ast})$. In section \ref{maspy}, we consider the Monge-Amp\`{e}re scattering problem that arises in the study of physical regular flows. Finally, in section \ref{discuss}, we discuss the implications of Theorem \ref{mainrez} for the kinetic theory of gases.
\section{Regular Flows}\label{odeder}
To define the class of flows on which we shall focus in this work, we must introduce a few basic definitions. We begin with the following.
\begin{defn}
For $M\geq 1$, let $\mathrm{LSC}(\mathbb{R}, \mathbb{R}^{M})$ and $\mathrm{USC}(\mathbb{R}, \mathbb{R}^{M})$ denote the vector spaces of lower- and upper-semicontinuous\footnote{We say a vector-valued map is lower semi-continuous on $\mathbb{R}$ if and only if each of its component functions is lower semi-continuous on $\mathbb{R}$.}  maps on $\mathbb{R}$. We define the operators 
\begin{equation*}
L:\mathrm{USC}(\mathbb{R}, \mathbb{R}^{M})\rightarrow\mathrm{LSC}(\mathbb{R}, \mathbb{R}^{M})
\end{equation*} and 
\begin{equation*}
U:\mathrm{LSC}(\mathbb{R}, \mathbb{R}^{M})\rightarrow\mathrm{USC}(\mathbb{R}, \mathbb{R}^{M})
\end{equation*}
by
\begin{equation*}
(Lf)(t):=\lim_{s\rightarrow t-}f(s) \quad \text{and}\quad (Ug)(t):=\lim_{s\rightarrow t+}g(s)
\end{equation*}
for each $t\in\mathbb{R}$ and all $f\in \mathrm{USC}(\mathbb{R}, \mathbb{R}^{M})$ and $g\in \mathrm{LSC}(\mathbb{R}, \mathbb{R}^{M})$.
\end{defn}
We now specify classes of regular maps which we shall use to model the `physical' evolution of non-spherical particles.
\begin{defn}
For $M\geq 1$ and $k\geq 0$, $C_{-}^{k}(\mathbb{R}, \mathbb{R}^{M})$ denotes the vector space of $k$-times left-differentiable maps on $\mathbb{R}$, i.e. $f\in C^{k}_{-}(\mathbb{R}, \mathbb{R}^{M})$ if and only if
\begin{equation*}
\lim_{h\rightarrow 0-}\frac{f^{(i)}(t+h)-f^{(i)}(t)}{h}\quad \text{exists for every}\hspace{1mm}t\in\mathbb{R},
\end{equation*}
and each $i=0, ..., k-1$. Similarly, $C_{+}^{k}(\mathbb{R}, \mathbb{R}^{M})$ denotes the vector space of $k$-times right-differentiable maps on $\mathbb{R}$.
\end{defn}
In general, the maps with which we work will belong to $C^{k}_{\pm}(\mathbb{R}, \mathbb{R}^{M})\setminus C^{k}(\mathbb{R}, \mathbb{R}^{M})$ due to the `presence of collisions' in the dynamics of hard sets. For this reason, we establish the following definition.
\begin{defn}[Collision Times]
Suppose $X\in C^{0}(\mathbb{R}, \mathscr{P}_{2}(\mathsf{P}_{\ast}))$. The set of {\bf collision times} $\mathcal{T}(X)\subseteq\mathbb{R}$ for the map $X$ is given by
\begin{equation*}
\mathcal{T}(X):=\left\{
t\in\mathbb{R}\,:\,\mathsf{P}(t)\cap\ov{\mathsf{P}}(t)\neq \varnothing
\right\}.
\end{equation*}
If the map $X$ is uniquely determined by an initial datum $X_{0}\in\mathscr{P}_{2}(\mathsf{P}_{\ast})$, we denote $\mathcal{T}(X)$ simply by $\mathcal{T}(X_{0})$.
\end{defn}
We now able to define sets of pre- and post-collisional velocities for suitably-smooth maps on $\mathbb{R}$.
\begin{defn}[Pre- and Post-collisional Velocities]
Suppose $\{T_{t}\}_{t\in\mathbb{R}}$ is a 1-parameter group of operators $T_{t}:\mathscr{D}_{2}(\mathsf{P}_{\ast})\rightarrow\mathscr{D}_{2}(\mathsf{P}_{\ast})$ with identity $T_{0}:=\mathrm{id}_{\mathscr{D}_{2}(\mathsf{P}_{\ast})}$, endowed with the group operation $T_{s}\circ T_{t}:=T_{s+t}$ for all $s, t\in\mathbb{R}$. Suppose further that $\{T_{t}\}_{t\in\mathbb{R}}$ has the property that the maps $t\mapsto (\Pi_{2}\circ T_{t})(Z_{0})$ belong to $\mathrm{LSC}(\mathbb{R}, \mathbb{R}^{6})$ for every $Z_{0}\in\mathscr{D}_{2}(\mathsf{P}_{\ast})$. The set $\mathcal{V}^{-}(X)\subseteq\mathbb{R}^{6}$ of velocities which are {\bf pre-collisional} with respect to a collision configuration $X\in\partial\mathscr{P}_{2}(\mathsf{P}_{\ast})$ is given by
\begin{equation*}
\mathcal{V}^{-}(X):=\left\{
V\in\mathbb{R}^{6}\,:\, V=\lim_{t\rightarrow 0-}(\Pi_{2}\circ T_{t})(Z)\hspace{2mm}\text{for}\hspace{2mm}Z=[X, V]
\right\}.
\end{equation*}
Similarly, the set of velocities $\mathcal{V}^{+}(X)\subseteq\mathbb{R}^{6}$ which are {\bf post-collisional} with respect to the collision configuration $X$ is
\begin{equation*}
\mathcal{V}^{+}(X):=\left\{
V\in\mathbb{R}^{6}\,:\, V=\lim_{t\rightarrow 0-}(L\circ \Pi_{2}\circ T_{-t})(Z)\hspace{2mm}\text{for}\hspace{2mm}Z=[X, V]
\right\}.
\end{equation*}
\end{defn}
We now specify the class of flows on $\mathscr{D}_{2}(\mathsf{P}_{\ast})$ with which we work in the rest of this article.
\begin{defn}[Regular Flow]\label{rbf}
We call a 1-parameter group $\{T_{t}\}_{t\in\mathbb{R}}$ of maps a {\bf regular flow} on $\mathscr{D}_{2}(\mathsf{P}_{\ast})$ if and only if the following properties hold true:
\begin{enumerate}[(R1)]
\item For each $Z_{0}\in\mathscr{D}_{2}(\mathsf{P}_{\ast})$, the set of collision times $\mathcal{T}(Z_{0})$ is either (i) the empty set $\varnothing$, (ii) the whole real line $\mathbb{R}$, or (iii) a locally-finite countable subset of $\mathbb{R}$; \vspace{1mm}
\item For each $Z_{0}\in\mathscr{D}_{2}(\mathsf{P}_{\ast})$, the map $t\mapsto (\Pi_{1}\circ T_{t})(Z_{0})$ lies in $C^{1}_{-}(\mathbb{R}, \mathbb{R}^{6})\cap C^{1}_{+}(\mathbb{R}, \mathbb{R}^{6})$. Moreover, the map $t\mapsto (\Pi_{2}\circ T_{t})(Z_{0})$ lies in $\mathrm{LSC}(\mathbb{R}, \mathbb{R}^{6})\cap C_{-}^{1}(\mathbb{R}, \mathbb{R}^{6})$. Finally, both the maps $t\mapsto (\Pi_{1}\circ T_{t})(Z_{0})$ and $t\mapsto (\Pi_{2}\circ T_{t})(Z_{0})$ are real analytic on the open set $\mathbb{R}\setminus\mathcal{T}(Z_{0})$; \vspace{1mm}
\item For almost every $X\in\partial\mathscr{P}_{2}(\mathsf{P}_{\ast})$, $\mathcal{V}^{-}(X)$ and $\mathcal{V}^{+}(X)$ are homeomorphic to the closed half-space $\{V\in\mathbb{R}^{6}\,:\,V_{6}\geq 0\}$; \vspace{1mm}
\item For almost every $X\in\partial\mathscr{P}_{2}(\mathsf{P}_{\ast})$, the map $s(\cdot; X):\mathcal{V}^{-}(X)\rightarrow\mathcal{V}^{+}(X)$ defined by
\begin{equation*}
s(\cdot; X):V\mapsto \lim_{t\rightarrow 0+}(\Pi_{2}\circ T_{t})(Z) \quad \text{for}\hspace{2mm}Z=[X, V]
\end{equation*}
is a Lebesgue measure-preserving $C^{1}$ diffeomorphism; \vspace{1mm}
\item For each $t\in\mathbb{R}$, if $T_{t}^{R}:\mathscr{D}_{2}(\mathsf{P}_{\ast})\rightarrow\mathscr{D}_{2}(\mathsf{P}_{\ast})$ is the operator defined by the relations
\begin{equation*}
\Pi_{1}\circ T_{t}^{R}:=\Pi_{1}\circ T_{-t} \quad \text{and}\quad \Pi_{2}\circ T_{t}^{R}:= L\circ\Pi_{2}\circ T_{-t},
\end{equation*} 
then $\{T_{t}^{R}\}_{t\in\mathbb{R}}=\{T_{t}\}_{t\in\mathbb{R}}$.
\end{enumerate}
\end{defn}
\begin{rem}
We draw the reader's attention to the fact that properties (R3) and (R4) are formulated for only {\em almost every} boundary point of $\mathscr{P}_{2}(\mathsf{P}_{\ast})$ as opposed to all points thereof. The reason for this choice is that there are phase spaces $\mathscr{D}_{2}(\mathsf{P}_{\ast})$ generated by `reasonable' strictly convex sets $\mathsf{P}_{\ast}$ and `natural' regular flows defined thereon for which there exist points $X\in\partial\mathscr{P}_{2}(\mathsf{P}_{\ast})$ such that $\mathcal{V}^{-}(X)$ is {\em not} a closed half-space in $\mathbb{R}^{6}$. Indeed, it has been shown in \textsc{Palffy-Muhoray, Virga, Wilkinson and Zheng} \cite{palffy2017paradox} and proved in \textsc{Wilkinson} \cite{mw111} that when $\mathsf{P}_{\ast}$ is taken to be an ellipse, that ``almost every'' cannot be replaced by ``every''.
\end{rem}
The class of regular flows is readily seen to be non-empty class (consider the case when $\mathsf{P}_{\ast}$ is a disk, for instance). Let us now comment briefly on the meaning of each of the above properties.
\subsection{A Brief Discussion of the Properties (R1)--(R6)}

\subsubsection{(R1) and (R2): Trajectory Regularity}
As one may readily check, this property is certainly consistent with the case of two congruent hard disks evolving in the whole space $\mathbb{R}^{2}$. In the case that the reference particle $\mathsf{P}_{\ast}\subset\mathbb{R}^{2}$ does not possess rotational symmetry, these regularity criteria simply reduce to the exclusion of (i) the phenomenon of accumulation of collision times of two such particles on any compact time interval, and (ii) rolling of one particle over another in free space.
\subsubsection{(R3): Scattering Map Regularity}
Whilst perhaps unimportant for the construction of weak solutions of \eqref{nemo}, this condition is required from the point of view of the theory of the Boltzmann equation. Indeed, the reader can verify that for any unit vector $n\in\mathbb{R}^{3}$, the $C^{1}$-diffeomorphism of $\mathbb{R}^{6}$ effected by the map
\begin{equation}\label{velmappy}
\left[
\begin{array}{c}
v \\
\ov{v}
\end{array}
\right]\mapsto \left[
\begin{array}{c}
v-n\otimes n(v-\ov{v}) \\
\ov{v}+n\otimes n(v-\ov{v})
\end{array}
\right]
\end{equation}
has unit Jacobian on $\mathbb{R}^{6}$, and that it is precisely this property of the scattering matrix that allows one to prove the well-known H-theorem for Boltzmann's kinetic equation: see \textsc{Cercignani, Illner and Pulvirenti} (\cite{cercignani2013mathematical}, chapter 3) for further details on this point.
\subsubsection{(R4): The Dynamics is Time-reversible}
It is a formal convention in kinetic theory that the family of solution operators associated to the ODEs of $N$-particle motion admit the property that they `cannot discern past from future'. Thus, property (R4) is simply a mathematical articulation of this convention. On the other hand, owing to the H-theorem, it is well known that this property does not hold for the formal flow of solution operators for the average dynamics as governed by the Boltzmann equation.

\subsection{\label{sec:level1}Weak Solutions and Physical Regular Flows}
We now define some physical functionals (derived from Euler's First and Second Laws of Motion, c.f. \textsc{Truesdell} \cite{truesdell}) of the dynamics generated by a regular flow $\{T_{t}\}_{t\in\mathbb{R}}$. We denote by $\mathrm{LM}:\mathbb{R}\times\mathscr{D}_{2}(\mathsf{P}_{\ast})\rightarrow\mathbb{R}^{2}$ the {\em linear momentum functional} given by
\begin{equation}\label{lmf}
\mathrm{LM}(t, Z_{0}):=m\left(
\begin{array}{c}
(\Pi_{2}\circ T_{t})(Z_{0})_{1}+(\Pi_{2}\circ T_{t})(Z_{0})_{3}\\
(\Pi_{2}\circ T_{t})(Z_{0})_{2}+(\Pi_{2}\circ T_{t})(Z_{0})_{4}
\end{array}
\right).
\end{equation}
We also define the {\em angular momentum functional} $\mathrm{AM}:\mathbb{R}^{2}\times \mathbb{R}\times\mathscr{D}_{2}(\mathsf{P}_{\ast})\rightarrow\mathbb{R}$ by
\begin{align}\label{amf}
\mathrm{AM}(a, t, Z_{0}):=
-m\left(
\begin{array}{c}
a_{1}-(\Pi_{1}\circ T_{t})(Z_{0})_{1} \\
a_{2}-(\Pi_{1}\circ T_{t})(Z_{0})_{2}
\end{array}
\right)^{\perp}\cdot \left(
\begin{array}{c}
(\Pi_{2}\circ T_{t})(Z_{0})_{1} \\
(\Pi_{2}\circ T_{t})(Z_{0})_{2}
\end{array}
\right)+J(\Pi_{2}\circ T_{t})(Z_{0})_{5}\notag \\ -m\left(
\begin{array}{c}
a_{1}-(\Pi_{1}\circ T_{t})(Z_{0})_{3} \\
a_{2}-(\Pi_{1}\circ T_{t})(Z_{0})_{4}
\end{array}
\right)^{\perp}\cdot \left(
\begin{array}{c}
(\Pi_{2}\circ T_{t})(Z_{0})_{3} \\
(\Pi_{2}\circ T_{t})(Z_{0})_{4}
\end{array}
\right)+J(\Pi_{2}\circ T_{t})(Z_{0})_{6}.
\end{align}
Finally, we write $\mathrm{KE}:\mathbb{R}\times\mathscr{D}_{2}(\mathsf{P}_{\ast})\rightarrow\mathbb{R}$ to denote the {\em kinetic energy functional} given by
\begin{equation}\label{kef}
\mathrm{KE}(t, Z_{0}):=|M(\Pi_{2}\circ T_{t})(Z_{0})|^{2},
\end{equation}
where $M\in\mathbb{R}^{6\times 6}$ is the mass-inertia matrix. We are finally in a position to define what we mean by a `physical' regular flow on $\mathscr{D}_{2}(\mathsf{P}_{\ast})$.
\begin{defn}[Physical Regular Flow]\label{physregflow}
A {\bf physical regular flow} $\{T_{t}\}_{t\in\mathbb{R}}$ on $\mathscr{D}_{2}(\mathsf{P}_{\ast})$ is a regular flow for which every choice of initial datum $Z_{0}\in\mathscr{D}_{2}(\mathsf{P}_{\ast})$, the trajectory $t\mapsto Z(t):=T_{t}Z_{0}$ is a global-in-time physical weak solution of \eqref{newty}. In particular, it respects the conservation of linear momentum
\begin{equation}\label{lm}
\mathrm{LM}(t, Z_{0})=\mathrm{LM}(0, Z_{0})\quad \text{for all}\hspace{2mm}t\in\mathbb{R},
\end{equation}
the conservation of angular momentum
\begin{equation}\label{am}
\mathrm{AM}(a, t, Z_{0})=\mathrm{AM}(a, 0, Z_{0})\quad \text{for all}\hspace{2mm}t\in\mathbb{R}
\end{equation}
and any $a\in\mathbb{R}^{3}$, and the conservation of kinetic energy
\begin{equation}\label{ke}
\mathrm{KE}(t, Z_{0})=\mathrm{KE}(0, Z_{0})\quad \text{for all}\hspace{2mm}t\in\mathbb{R}.
\end{equation}
\end{defn}
Owing to the general existence theory of \textsc{Ballard} \cite{ball}, it will prove useful to reduce our study of physical regular flows on $\mathscr{D}_{2}(\mathsf{P}_{\ast})$ to the family of {\em scattering maps} to which they give rise. We do this in the following section.
\section{\label{sec:level1}Physical Regular Flows and their Scattering Maps}\label{scat}
In this section, we derive the basic properties of scattering maps associated to any given physical regular flow. As we have already seen in property (R4) of regular flows above, scattering maps are an injective assignment of post-collisional velocities (both linear and angular) to given pre-collisional velocities. In order to study scattering maps, it will prove useful parameterise the set of collision configurations of two congruent particles $\partial\mathscr{P}_{2}(\mathsf{P}_{\ast})$ in a convenient manner. 
\subsection{An Atlas for the Set of Collision Configurations}\label{para}
In the case that $\mathsf{P}_{\ast}$ is a compact, strictly-convex set with real-analytic boundary, we claim that the boundary set $\partial\mathscr{P}_{2}(\mathsf{P}_{\ast})$ itself admits the structure of a real-analytic manifold with boundary\footnote{This is not the case for $N\geq 3$ hard particles, in which case $\partial\mathcal{P}_{3}(\mathsf{P}_{\ast})$ is seen readily only to be real analytic manifold with corners.}. To show this, we construct an atlas for $\partial\mathscr{P}_{2}(\mathsf{P}_{\ast})$ comprised of real-analytic charts. 

Let a boundary point $X=[x, \ov{x}, \vartheta, \ov{\vartheta}]\in\partial\mathscr{P}_{2}(\mathsf{P}_{\ast})$ be given. Such an $X$ on the boundary is of the shape
\begin{equation*}
X=\left(
\begin{array}{c}
x_{1} \\
x_{2} \\
x_{1}+d_{\beta(X)}\cos\psi(X) \\
x_{2}+d_{\beta(X)}\sin\psi(X) \\
\alpha(X) \\
\ov{\alpha}(X)
\end{array}
\right)
\end{equation*}
for some $x\in\mathbb{R}^{2}$ and $\beta(X):=(\alpha(X), \ov{\alpha}(X), \psi(X))\in\mathbb{T}^{3}$ where the number $d_{\beta}>0$ denotes the so-called {\em distance of closest approach} given by \begin{equation}\label{dca}
d_{\beta}:=\inf\left\{d>0\,:\,\mathrm{card}\,(R(\vartheta)\mathsf{P}_{\ast}\cap (R(\ov{\vartheta})\mathsf{P}_{\ast}+d e(\psi)))=0\right\},
\end{equation}
for $\beta=(\vartheta, \ov{\vartheta}, \psi)\in\mathbb{T}^{3}$. As the distance of closest approach is invariant with respect to global rotations of the particle system (i.e. $d_{\beta}=d_{\beta+\beta'}$ for all $\beta, \beta'\in\mathbb{T}^{3}$), one can check that
\begin{equation*}
d_{\beta}=D(\ov{\vartheta}-\vartheta, \psi-\vartheta)\quad \text{for all}\hspace{2mm}\beta=(\vartheta, \ov{\vartheta}, \psi)\in\mathbb{T}^{3},
\end{equation*}
where
\begin{equation*}
D(\theta, \psi):=\inf\left\{d>0\,:\,\mathrm{card}\,(\mathsf{P}_{\ast}\cap (R(\ov{\theta})\mathsf{P}_{\ast}+d e(\psi)))=0\right\}.
\end{equation*}
If $N_{\beta(X)}\subset\mathbb{T}^{3}$ denotes some open neighbourhood of $\beta(X)$, we define the associated local chart map $\phi_{X}: \mathbb{R}^{2}\times N_{\beta(X)}\rightarrow\partial\mathscr{P}_{2}(\mathsf{P}_{\ast})$ by
\begin{equation*}
\phi_{X}(\xi):=\left(
\begin{array}{c}
\xi_{1}\\
\xi_{2} \\
d_{\beta}\cos\psi \\
d_{\beta}\sin\psi \\
\alpha \\
\ov{\alpha}
\end{array}
\right) \quad \text{for}\hspace{2mm}(\xi_{1}, \xi_{2}, \alpha, \ov{\alpha}, \psi)\in\mathbb{R}^{2}\times N_{\beta(X)}.
\end{equation*}
It can be shown (see \textsc{Palffy-Muhoray and Zheng} \cite{zheng2007distance} in the case of ellipses, for instance) that $\phi_{X}$ is a real analytic map on $\mathbb{R}^{2}\times N_{\beta(X)}$. As such, the family $\Phi:=\{(\phi_{X}, N_{\beta(X)})\}_{X\in \partial\mathscr{P}_{2}(\mathsf{P}_{\ast})}$ admits the property of being an atlas for $\partial\mathscr{P}_{2}(\mathsf{P}_{\ast})$ comprised of real-analytic maps. 

As it will be of use in the sequel, we calculate the outward normal map to $\partial\mathscr{P}_{2}(\mathsf{P}_{\ast})$ explicitly in terms of the distance of closest approach map $\beta\mapsto d_{\beta}$ on $\mathbb{T}^{3}$. In loose terms, the outward normal for us will be the unit vector that `points into' the region $\mathbb{R}^{6}\setminus\mathscr{P}_{2}(\mathsf{P}_{\ast})$ at points on the boundary $\partial\mathscr{P}_{2}(\mathsf{P}_{\ast})$. We have the following lemma.
\begin{lem}
For $\beta=(\vartheta, \ov{\vartheta}, \psi)\in\mathbb{T}^{3}$, let $X_{\beta}\in\partial\mathscr{P}_{2}(\mathsf{P}_{\ast})$ denote the point
\begin{equation*}
X_{\beta}:=\left(
\begin{array}{c}
0 \\
0 \\
d_{\beta}\cos\psi \\
d_{\beta}\sin\psi \\
\vartheta \\
\ov{\vartheta}
\end{array}
\right).
\end{equation*}
The outward unit normal $\widehat{\gamma}_{\beta}\in\mathbb{S}^{5}$ to $\partial\mathscr{P}_{2}(\mathsf{P}_{\ast})$ at $X_{\beta}$ is given by
\begin{equation}\label{bigolevector}
\widehat{\gamma}_{\beta}:=\frac{1}{\Lambda_{\beta}}\left[
\begin{array}{c}
\displaystyle \frac{-e(\psi)+\displaystyle\frac{1}{d_{\beta}}\displaystyle\frac{\partial D}{\partial \theta}(\ov{\vartheta}-\vartheta, \psi-\vartheta)e(\psi)^{\perp}}{1+\displaystyle\frac{1}{d_{\beta}^{2}}\left(\displaystyle \frac{\partial D}{\partial \theta}(\ov{\vartheta}-\vartheta, \psi-\vartheta)\right)^{2}} \vspace{2mm}\\
\displaystyle \frac{e(\psi)-\displaystyle\frac{1}{d_{\beta}}\displaystyle\frac{\partial D}{\partial \theta}(\ov{\vartheta}-\vartheta, \psi-\vartheta)e(\psi)^{\perp}}{1+\displaystyle\frac{1}{d_{\beta}^{2}}\left(\displaystyle \frac{\partial D}{\partial \theta}(\ov{\vartheta}-\vartheta, \psi-\vartheta)\right)^{2}} \vspace{2mm} \\
\displaystyle -\frac{\displaystyle \frac{\partial D}{\partial \theta}(\ov{\vartheta}-\vartheta, \psi-\vartheta)+\frac{\partial D}{\partial \psi}(\ov{\vartheta}-\vartheta, \psi-\vartheta)}{1+\displaystyle\frac{1}{d_{\beta}^{2}}\left(\displaystyle \frac{\partial D}{\partial \theta}(\ov{\vartheta}-\vartheta, \psi-\vartheta)\right)^{2}} \vspace{2mm} \\
\displaystyle \frac{\displaystyle \frac{\partial D}{\partial \theta}(\ov{\vartheta}-\vartheta, \psi-\vartheta)}{1+\displaystyle\frac{1}{d_{\beta}^{2}}\left(\displaystyle \frac{\partial D}{\partial \theta}(\ov{\vartheta}-\vartheta, \psi-\vartheta)\right)^{2}}
\end{array}
\right]
\end{equation}
where $\Lambda_{\beta}>0$ is the normalisation factor which renders $\|\widehat{\gamma}_{\beta}\|=1$. 
\end{lem}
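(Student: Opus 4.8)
The plan is to realise $\partial\mathscr{P}_{2}(\mathsf{P}_{\ast})$ near $X_{\beta}$ as a regular level set and to read off $\widehat{\gamma}_{\beta}$ as the suitably oriented unit gradient. Define $g$ on a neighbourhood of $X_{\beta}$ in $\mathbb{R}^{6}$ by $g(X):=|\ov{x}-x|-d_{\beta(X)}$, where $X=[x,\ov{x},\vartheta,\ov{\vartheta}]$, $\beta(X):=(\vartheta,\ov{\vartheta},\psi(X))$ and $\psi(X):=\arg(\ov{x}-x)$. Two congruent copies of $\mathsf{P}_{\ast}$ with orientations $\vartheta,\ov{\vartheta}$ separated along $e(\psi)$ are disjoint exactly when $|\ov{x}-x|>d_{\beta}$ and touch exactly when $|\ov{x}-x|=d_{\beta}$, so $\mathscr{P}_{2}(\mathsf{P}_{\ast})=\{g\geq 0\}$ and $\partial\mathscr{P}_{2}(\mathsf{P}_{\ast})=\{g=0\}$ locally. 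Hence $\nabla g(X_{\beta})$ is normal to the boundary, and since $-\nabla g$ points towards decreasing separation, i.e.\ into the overlap region $\mathbb{R}^{6}\setminus\mathscr{P}_{2}(\mathsf{P}_{\ast})$, I would set $\widehat{\gamma}_{\beta}:=-\nabla g(X_{\beta})/\|\nabla g(X_{\beta})\|$ and read off $\Lambda_{\beta}=\|\nabla g(X_{\beta})\|$. Equivalently, one may differentiate the chart $\phi_{X}$ of Subsection~\ref{para} to obtain the five tangent vectors to $\partial\mathscr{P}_{2}(\mathsf{P}_{\ast})$ at $X_{\beta}$ and take their orthogonal complement; running both routes is a convenient guard against sign errors.

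The computation is then a chain-rule exercise organised around $\rho:=|\ov{x}-x|$ and $\psi=\arg(\ov{x}-x)$. With $r:=\ov{x}-x$ one uses the polar identities $\nabla_{\ov{x}}\rho=e(\psi)$ and $\nabla_{\ov{x}}\psi=\rho^{-1}e(\psi)^{\perp}$, together with $\nabla_{x}=-\nabla_{\ov{x}}$ on functions of $r$ alone. The orientation dependence of $g$ enters only through $d_{\beta}$, and here I would exploit the rotational-invariance reduction $d_{\beta}=D(\ov{\vartheta}-\vartheta,\psi-\vartheta)$: differentiating through the two slots of $D$ gives $\partial_{\vartheta}d_{\beta}=-\big(\tfrac{\partial D}{\partial\theta}+\tfrac{\partial D}{\partial\psi}\big)$, $\partial_{\ov{\vartheta}}d_{\beta}=\tfrac{\partial D}{\partial\theta}$ and $\partial_{\psi}d_{\beta}=\tfrac{\partial D}{\partial\psi}$, each evaluated at $(\ov{\vartheta}-\vartheta,\psi-\vartheta)$. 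Assembling the six partials of $g$ produces a translational block spanned by $e(\psi)$ and $e(\psi)^{\perp}$ (the latter carried by the $\psi$-dependence of $D$) and the two angular entries $-\big(\tfrac{\partial D}{\partial\theta}+\tfrac{\partial D}{\partial\psi}\big)$ and $\tfrac{\partial D}{\partial\theta}$, which is exactly the shape of the vector \eqref{bigolevector}; dividing by $\|\nabla g(X_{\beta})\|$ gives $\widehat{\gamma}_{\beta}$, and any common positive factor appearing in the brackets (such as the displayed $1+d_{\beta}^{-2}(\partial D/\partial\theta)^{2}$) is immaterial before normalisation and is absorbed into $\Lambda_{\beta}$. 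The one genuinely error-prone point is the bookkeeping of the translational block---precisely which slot-derivative of $D$ multiplies $e(\psi)^{\perp}$, and the global orientation of the two half-vectors---which I would fix by demanding that $\widehat{\gamma}_{\beta}$ point into $\mathbb{R}^{6}\setminus\mathscr{P}_{2}(\mathsf{P}_{\ast})$ and cross-check at a symmetric head-on contact, where $\tfrac{\partial D}{\partial\theta}=\tfrac{\partial D}{\partial\psi}=0$ and the two translational half-vectors must carry the particles towards one another.

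The main obstacle is not this algebra but the prerequisite that $\partial\mathscr{P}_{2}(\mathsf{P}_{\ast})$ is genuinely a $C^{1}$---indeed real-analytic---hypersurface near $X_{\beta}$ carrying a well-defined normal, equivalently that $\beta\mapsto d_{\beta}$ is differentiable with the partials $\tfrac{\partial D}{\partial\theta}$ and $\tfrac{\partial D}{\partial\psi}$ used above. This is exactly where strict convexity and real-analyticity of $\partial\mathsf{P}_{\ast}$ enter: they force a unique contact point at a touching configuration and permit the implicit function theorem to be applied to the common-tangency equations defining the distance of closest approach, yielding real-analytic dependence of $d_{\beta}$ on $\beta$ (this is the regularity of the distance-of-closest-approach map, established in the model case of ellipses in \cite{zheng2007distance}). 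Granting this, $\nabla g(X_{\beta})\neq 0$---the $e(\psi)$ contributions alone keep the translational block from vanishing---so $X_{\beta}$ is a regular point of $g$, the level-set description is legitimate, and $\Lambda_{\beta}=\|\nabla g(X_{\beta})\|>0$ is well defined, which completes the argument.
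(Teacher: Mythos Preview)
Your approach is essentially identical to the paper's: the paper's proof consists of the single observation that $X_{\beta}$ lies on the zero level set of $F(X):=|x-\ov{x}|-d_{\beta(X)}$ and that $D$ inherits the regularity of $\partial\mathsf{P}_{\ast}$, leaving the gradient computation to the reader. You carry out precisely this computation in detail, with the same level-set function (your $g$ equals the paper's $F$), and you add the helpful cross-check via the tangent vectors of the chart $\phi_{X}$ and the explicit chain-rule bookkeeping for $d_{\beta}=D(\ov{\vartheta}-\vartheta,\psi-\vartheta)$, none of which the paper spells out.
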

\begin{proof}
This follows from a calculation, using the fact that $X_{\beta}$ lies on the 0-level set of the function $F:\mathbb{R}^{6}\rightarrow\mathbb{R}$ given by
\begin{equation*}
F(X):=|x-\ov{x}|-d_{\beta(X)} \quad \text{for}\hspace{2mm}X\in\partial\mathscr{P}_{2}(\mathsf{P}_{\ast}),
\end{equation*}
and that $D$ inherits the regularity of the boundary of $\mathsf{P}_{\ast}$.
\end{proof}
It is at this point that one can see the pertinence of the half-spaces involved in defining phase space $\mathscr{D}_{2}(\mathsf{P}_{\ast})$ as a fibre bundle. It is fair to claim that expression \eqref{bigolevector} is somewhat unwieldy. Thankfully, we can characterise the normal $\widehat{\gamma}_{\beta}\in\mathbb{S}^{5}$ to the codimension 1 set $\partial\mathscr{P}_{2}(\mathsf{P}_{\ast})\subset\mathbb{R}^{6}$ in terms of more familiar `collision data' at the level of $\mathbb{R}^{2}$. The collision data of which we speak are the {\em collision vector} $p_{\beta}\in\mathbb{R}^{2}$, defined as the unique element of the singleton set
\begin{equation*}
R(\vartheta)\mathsf{P}_{\ast}\cap(R(\ov{\vartheta})\mathsf{P}_{\ast}+d_{\beta}e(\psi)),
\end{equation*}
together with the {\em conjugate collision vector} $q_{\beta}:=p_{\beta}-d_{\beta}e(\psi)$ and the outward {\em contact normal} $n_{\beta}\in\mathbb{S}^{1}$ to the closed curve $R(\vartheta)\partial\mathsf{P}_{\ast}$ at the point $p_{\beta}\in R(\vartheta)\partial\mathsf{P}_{\ast}$. Indeed, it will be shown that
\begin{equation*}
p_{\beta}^{\perp}\cdot n_{\beta}=\displaystyle \frac{\displaystyle \frac{\partial D}{\partial \theta}(\ov{\vartheta}-\vartheta, \psi-\vartheta)+\frac{\partial D}{\partial \psi}(\ov{\vartheta}-\vartheta, \psi-\vartheta)}{1+\displaystyle\frac{1}{d_{\beta}^{2}}\left(\displaystyle \frac{\partial D}{\partial \theta}(\ov{\vartheta}-\vartheta, \psi-\vartheta)\right)^{2}} 
\end{equation*}
and
\begin{equation*}
n_{\beta}=\displaystyle \frac{e(\psi)-\displaystyle\frac{1}{d_{\beta}}\displaystyle\frac{\partial D}{\partial \theta}(\ov{\vartheta}-\vartheta, \psi-\vartheta)e(\psi)^{\perp}}{1+\displaystyle\frac{1}{d_{\beta}^{2}}\left(\displaystyle \frac{\partial D}{\partial \theta}(\ov{\vartheta}-\vartheta, \psi-\vartheta)\right)^{2}},
\end{equation*}
for $\beta\in\mathbb{T}^{3}$. We believe that this is the first time these identities have been established in works on the geometry of smooth convex bodies. In any case, as a result of the above identities, $p_{\beta}$, $q_{\beta}$ and $n_{\beta}$ constitute the essential spatial data one employs to construct boundary conditions for the IBVP associated to \eqref{nemo} in the sequel.
\subsection{Pre- and Post-collisional Velocities}
As has been discussed in \textsc{Wilkinson} \cite{mw111}, the structure of the set of pre- and post-collisional velocities depends sensitively on geometric properties of $\mathsf{P}_{\ast}$. Aside from the case when $\mathsf{P}_{\ast}$ is a disk, it is in general a very difficult task to characterise explicitly, for all $X\in\partial\mathscr{P}_{2}(\mathsf{P}_{\ast})$, the sets $\mathcal{V}^{+}(X)$ and $\mathcal{V}^{-}(X)$ for physical regular flows. For the purposes of demonstrating our main non-uniqueness result, however, it is not necessary to obtain a characterisation.

It is important to emphasise the fact that, due to the method of proof we employ, we work only with those subsets of $\mathbb{R}^{2}$ which are realised as the level set of real-analytic functions on $\mathbb{R}^{2}$. The following proof has been taken, in essence, from \textsc{Pallfy-Muhoray, Virga, Wilkinson and Zheng} \cite{palffy2017paradox}.
\begin{prop}\label{taylorstuff}
Let $\{T_{t}\}_{t\in\mathbb{R}}$ be a physical regular flow on $\mathscr{P}_{2}(\mathsf{P}_{\ast})$, where $\mathsf{P}_{\ast}\subset\mathbb{R}^{2}$ is a compact, strictly convex set with the following properties: 
\begin{enumerate}[{\rm (B1)}]

\item $\partial\mathsf{P}_{\ast}$ admits the structure of a 1-dimensional real-analytic manifold; and 
\item there exists a real-analytic map $b_{\ast}:\mathbb{R}^{2}\rightarrow\mathbb{R}$ such that
\begin{itemize} 
\item $\partial\mathsf{P}_{\ast}=\{y\in\mathbb{R}^{2}\,:\,b_{\ast}(y)=0\}$;
\item $\mathrm{int}\,\mathsf{P}_{\ast}=\{y\in\mathbb{R}^{2}\,:\,b_{\ast}(y)<0\}$;
\item $\mathbb{R}^{2}\setminus\mathsf{P}_{\ast}=\{y\in\mathbb{R}^{2}\,:\,b_{\ast}(y)>0\}$.
\end{itemize}
\end{enumerate}
Then, for any $X\in\partial\mathscr{P}_{2}(\mathsf{P}_{\ast})$ one has that
\begin{equation*}
\mathcal{V}^{-}(X)\supseteq\left\{V\in\mathbb{R}^{6}\,:\,V\cdot M\widehat{\nu}_{\beta}< 0\right\} 
\end{equation*}
and\begin{equation}\label{postee}
\mathcal{V}^{+}(X)\supseteq\left\{V\in\mathbb{R}^{6}\,:\,V\cdot M\widehat{\nu}_{\beta}> 0\right\},
\end{equation}
where $\widehat{\nu}_{\beta}\in\mathbb{R}^{6}$ is the unit vector
\begin{equation}\label{eetah}
\widehat{\nu}_{\beta}:=\frac{1}{\sqrt{\frac{2}{m}+\frac{1}{J}|p_{\beta}^{\perp}\cdot n_{\beta}|^{2}+\frac{1}{J}|q_{\beta}^{\perp}\cdot n_{\beta}|^{2}}}M^{-1}\left[
\begin{array}{c}
-n_{\beta} \\ 
n_{\beta} \\ 
-p_{\beta}^{\perp}\cdot n_{\beta} \\ 
q_{\beta}^{\perp}\cdot n_{\beta}
\end{array}
\right].
\end{equation}
\end{prop}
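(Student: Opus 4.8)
The plan is to prove both inclusions by realising each admissible velocity as a one-sided velocity limit of an explicit free-flight trajectory, the whole matter reducing to a single first-order (Taylor) expansion of an interpenetration functional at the contact point. Fix $X=[x,\ov{x},\vartheta,\ov{\vartheta}]\in\partial\mathscr{P}_{2}(\mathsf{P}_{\ast})$, write $\beta=\beta(X)$, and take $V=[v,\ov{v},\omega,\ov{\omega}]$ with $V\cdot M\widehat{\nu}_{\beta}<0$; set $Z=[X,V]$. Since $M\ddot{X}=0$ holds off collision times and the Radon measure $\mu$ of Definition \ref{physweak} has empty support on any open interval over which the two bodies are disjoint, on any collision-free subinterval the physical weak solution through $Z$ is the affine free-flight motion $t\mapsto X+tV$ with constant velocity $V$. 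It therefore suffices to prove the purely geometric fact that the two congruent bodies determined by $X+tV$ are disjoint for all sufficiently small $t<0$: granting this, $(\Pi_{2}\circ T_{t})(Z)=V$ on $(-\varepsilon,0)$ by (R2), so the left-hand limit equals $V$ and hence $V\in\mathcal{V}^{-}(X)$.

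To control overlap I would introduce the real-analytic interpenetration functional $G(y,t):=b_{\ast}(R(\ov{\vartheta}+t\ov{\omega})^{-1}(R(\vartheta+t\omega)y+x+tv-\ov{x}-t\ov{v}))$, which records the value of the second body's defining function at the spatial position occupied at time $t$ by the material point $y$ of the first body; the two bodies are disjoint precisely when $\rho(t):=\min_{y\in\partial\mathsf{P}_{\ast}}G(y,t)>0$. At $t=0$ the bodies touch at the single contact point $p_{\beta}$ (strict convexity forbids more than one contact point), so $\rho(0)=0$ and the minimum is attained at the unique contact preimage $y_{\ast}$. The decisive input is strict convexity together with real-analyticity of $b_{\ast}$: the relative (sum-of-)curvature of the two touching boundaries at $p_{\beta}$ is strictly positive, so the tangential second derivative of $G(\cdot,0)$ along $\partial\mathsf{P}_{\ast}$ at $y_{\ast}$ is strictly positive. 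The real-analytic implicit function theorem then furnishes a unique analytic minimiser $y(t)$ near $y_{\ast}$, whence $\rho$ is analytic near $0$ and, by the envelope identity, $\rho'(0)=\partial_{t}G(y_{\ast},0)$.

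The remaining step is the first-order kinematic computation. Differentiating $G$ at $(y_{\ast},0)$, one finds that $\nabla b_{\ast}$ at the contact preimage, transported to the spatial frame, is a positive multiple of the outward normal $-n_{\beta}$ of the second body, while the $t$-derivative of the argument of $b_{\ast}$ assembles the relative velocity of the two contact material points, namely $u:=(v-\ov{v})+\omega\,p_{\beta}^{\perp}-\ov{\omega}\,q_{\beta}^{\perp}$. This yields $\rho'(0)=-c\,(u\cdot n_{\beta})$ with $c>0$; on the other hand the very definition \eqref{eetah} of $\widehat{\nu}_{\beta}$, together with $MM^{-1}=I$, gives $V\cdot M\widehat{\nu}_{\beta}=-(u\cdot n_{\beta})/\sqrt{\frac{2}{m}+\frac{1}{J}|p_{\beta}^{\perp}\cdot n_{\beta}|^{2}+\frac{1}{J}|q_{\beta}^{\perp}\cdot n_{\beta}|^{2}}$. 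Thus $V\cdot M\widehat{\nu}_{\beta}<0$ is equivalent to $\rho'(0)<0$; since $\rho$ is $C^{1}$ with $\rho(0)=0$, this forces $\rho(t)>0$ for all small $t<0$, proving the first inclusion. The post-collisional inclusion \eqref{postee} follows from the identical expansion run forward in time, as $V\cdot M\widehat{\nu}_{\beta}>0$ gives $\rho'(0)>0$ and hence $\rho(t)>0$ for small $t>0$; unwinding the definition of $\mathcal{V}^{+}(X)$ through the reversed flow $T_{-t}$ (equivalently invoking the time-reversibility (R4)) then exhibits $V$ as the required one-sided limit.

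The step I expect to be the main obstacle is the construction and regularity of $\rho$ in the second paragraph: one must verify that the minimiser stays on $\partial\mathsf{P}_{\ast}$ near $y_{\ast}$ and varies analytically, and, above all, that the strictly negative first-order coefficient $\rho'(0)$ genuinely certifies disjointness of the full two-dimensional bodies rather than mere separation of the single contact point. This is exactly where strict convexity and the real-analyticity of $b_{\ast}$ are indispensable, as they guarantee the non-degeneracy of the contact that confines the minimiser to a neighbourhood of $y_{\ast}$ and lets the leading Taylor coefficient control the sign of $\rho$.
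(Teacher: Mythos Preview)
Your approach is correct and essentially the same as the paper's: both reduce the question to a first-order Taylor expansion of an interpenetration functional built from $b_{\ast}$ at the contact configuration, identifying the leading coefficient with a positive multiple of $V\cdot M\widehat{\nu}_{\beta}$. The paper tracks the single material point $x_{Q}(t)$ on $\ov{\mathsf{P}}(t)$ that sits at the contact position at $t=0$ and expands $b(x_{Q}(t),t):=b_{\ast}\bigl(R(\vartheta(t))^{T}(x_{Q}(t)-x(t))\bigr)$ in $t$, reading off the coefficient of $t$ as $n_{\beta}\cdot(\dot{x}_{Q}(0)-\dot{x}_{P}(0))$; your $\rho(t)=\min_{y\in\partial\mathsf{P}_{\ast}}G(y,t)$ is the dual formulation (parametrising over the first body and evaluating against the second), and your envelope identity $\rho'(0)=\partial_{t}G(y_{\ast},0)$ produces exactly the same quantity up to sign convention and the choice of which body carries the defining function.

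The one genuine difference is methodological rather than strategic: your min/implicit-function-theorem step is explicit about why control of the contact point certifies disjointness of the \emph{entire} bodies for small $t$, whereas the paper simply records $b(x_{Q}(t),t)\geq 0$ for the one tracked point and passes directly to $V\in\mathcal{V}^{-}(X)$. You correctly flag this as the place demanding care, and the non-degenerate tangential Hessian furnished by strict convexity is precisely what pins the minimiser near $y_{\ast}$ and lets the first-order coefficient govern the sign of $\rho$. The paper also writes out the second-order terms of the expansion, but these are not used for the strict half-space inclusions and serve only to indicate what happens on the null set $\{V\cdot M\widehat{\nu}_{\beta}=0\}$.
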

\begin{proof}
Let $Z_{0}\in\partial\mathscr{P}_{2}(\mathsf{P}_{\ast})$ be given. We begin by defining the map $b:\mathbb{R}^{2}\times\mathbb{R}\rightarrow\mathbb{R}$ by
\begin{equation*}
b(y, t):=b_{\ast}(R(\vartheta(t))^{T}(y-x(t))) 
\end{equation*}
for $y\in\mathbb{R}^{2}$ and $t\in\mathbb{R}$, where $\vartheta(t)$ and $x(t)$ are determined by the map $t\mapsto T_{t}Z_{0}$. By definition of $b_{\ast}$, one has that
\begin{align}
b(y, t)\left\{
\begin{array}{ll}
<0 & \quad \text{if}\hspace{2mm} y\in\mathrm{int}\,\mathsf{P}(t), \vspace{2mm}\\
=0 & \quad \text{if} \hspace{2mm}y\in\partial\mathsf{P}(t), \vspace{2mm} \\
>0 & \quad \text{if} \hspace{2mm}y\notin\mathsf{P}(t).
\end{array}
\right.
\end{align}
As such, one can consider $b$ as a marker for the dynamics. Indeed, if $\{T_{t}\}_{t\in\mathbb{R}}$ is a regular flow, then $b(\xi(t), t)\geq 0$ for the trajectory $t\mapsto \xi(t)$ of any material point on $\ov{\mathsf{P}}(t)$. Suppose, without loss of generality, that $Z_{0}\in\partial\mathscr{P}_{2}(\mathsf{P}_{\ast})$ has the property that $\vartheta_{0}=0$. The trajectory $t\mapsto T_{t}Z_{0}$ is real analytic in a left neighbourhood $I^{-}(Z_{0}):=(-\delta, 0)$ for some $\delta=\delta(Z_{0})>0$. We study the motion of the points $P=P(t)$ and $Q=Q(t)$ on the sets $\mathsf{P}(t)$ and $\ov{\mathsf{P}}(t)$, respectively, whose associated displacement vectors are defined by
\begin{equation}\label{kyoo}
x_{Q}(t):=\ov{x}(t)+R(\ov{\vartheta}(t))R(\ov{\vartheta}_{0})^{T}q_{\beta} \quad \text{and}\quad x_{P}(t):=x(t)+R(\vartheta(t))p_{\beta}.
\end{equation}
One has by Taylor's theorem that
\begin{equation*}
b_{\ast}(y)=\nabla b_{\ast}(p_{\beta})\cdot (y-p_{\beta})+\frac{1}{2}(y-p_{\beta})\cdot D^{2}b_{\ast}(p_{\beta})(y-p_{\beta})+ \mathcal{O}(|y-p_{\beta}|^{3})
\end{equation*}
as $y\rightarrow p_{\beta}$. In turn, one can show that $b(x_{Q}(t), t)$ admits the expansion
\begin{align}
b(x_{Q}(t), t)= t\left(n_{\beta}\cdot (\dot{x}_{Q}(0)-\dot{x}_{P}(0))\right) \vspace{2mm}\notag \\
+\frac{t^{2}}{2}\bigg(2\omega_{0}\cdot n_{\beta}^{\perp}\cdot(\dot{x}_{Q}(0)-\dot{x}_{P}(0))+n_{\beta}\cdot (\ddot{x}_{Q}(0)-\ddot{x}_{P}(0)) \vspace{2mm} \notag \\
+(\dot{x}_{Q}(0)-\dot{x}_{P}(0))\cdot D^{2}b_{\ast}(p_{\beta})(\dot{x}_{Q}(0)-\dot{x}_{P}(0))\bigg)+\mathcal{O}(t^{3})\quad \text{as}\hspace{2mm}t\rightarrow 0-,
\end{align}
where
\begin{equation*}
\dot{x}_{Q}(0)-\dot{x}_{P}(0)=\ov{v}_{0}+\ov{\omega}_{0}q_{\beta}^{\perp}-v_{0}-\omega_{0}p_{\beta}^{\perp}
\end{equation*}
and
\begin{equation*}
\ddot{x}_{Q}(0)-\ddot{x}_{P}(0)=-\ov{\omega}_{0}^{2}q_{\beta}+\omega_{0}^{2}p_{\beta}.
\end{equation*}
Manifestly, on a set of velocities of full measure, it is the sign of the leading coefficient $n_{\beta}\cdot (\dot{x}_{Q}(0)-\dot{x}_{P}(0))$ that determines the behaviour of the point trajectory $t\mapsto x_{Q}(t)$ for $t$ in a left neighbourhood of 0. As the coefficient of $t$ in the expansion of $b(x_{Q}(t), t)$ determines its sign in a left neighbourhood of $t=0$, it follows that if $n_{\beta}\cdot(v_{Q}(0)-v_{P}(0))<0$ then $b(x_{Q}(t), t)\geq 0$ for $t\in (-\delta', 0]$. Thus, we infer that
\begin{equation*}
\{V\in\mathbb{R}^{6}\,:\, M\widehat{\nu}_{\beta}\cdot V<0\}\subseteq\mathcal{V}^{-}(X),
\end{equation*}
where $\widehat{\nu}_{\beta}$ is the unit vector defined in \eqref{eetah} above. The case for post-collisional velocities is handled in the same way, yielding \eqref{postee}.
\end{proof}
We now prove the above-claimed geometric identities for the collision data in terms of the distance of closest approach map.
\begin{prop}
Suppose $\mathsf{P}_{\ast}\subset\mathbb{R}^{2}$ satisfies the same hypotheses as in Proposition \ref{taylorstuff} above. For any $\beta\in\mathbb{T}^{3}$, one has that
\begin{equation}\label{pee}
p_{\beta}^{\perp}\cdot n_{\beta}=\displaystyle \frac{\displaystyle \frac{\partial D}{\partial \theta}(\ov{\vartheta}-\vartheta, \psi-\vartheta)+\frac{\partial D}{\partial \psi}(\ov{\vartheta}-\vartheta, \psi-\vartheta)}{1+\displaystyle\frac{1}{d_{\beta}^{2}}\left(\displaystyle \frac{\partial D}{\partial \theta}(\ov{\vartheta}-\vartheta, \psi-\vartheta)\right)^{2}} 
\end{equation}
and
\begin{equation}\label{enn}
n_{\beta}=\displaystyle \frac{e(\psi)-\displaystyle\frac{1}{d_{\beta}}\displaystyle\frac{\partial D}{\partial \theta}(\ov{\vartheta}-\vartheta, \psi-\vartheta)e(\psi)^{\perp}}{1+\displaystyle\frac{1}{d_{\beta}^{2}}\left(\displaystyle \frac{\partial D}{\partial \theta}(\ov{\vartheta}-\vartheta, \psi-\vartheta)\right)^{2}},
\end{equation}
\end{prop}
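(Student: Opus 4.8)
The plan is to obtain both identities \eqref{pee} and \eqref{enn} by comparing two independent computations of the same object: the outward normal to the collision hypersurface $\partial\mathscr{P}_2(\mathsf{P}_{\ast})$ at the point $X_{\beta}$. The previous lemma furnishes the Euclidean outward unit normal $\widehat{\gamma}_{\beta}$ explicitly in terms of $d_{\beta}$, $\tfrac{\partial D}{\partial\theta}$ and $\tfrac{\partial D}{\partial\psi}$, as recorded in \eqref{bigolevector}. Proposition \ref{taylorstuff}, on the other hand, produces through the Taylor expansion of the marker function $b$ the vector $\widehat{\nu}_{\beta}$ of \eqref{eetah}, built directly from the collision data $p_{\beta}$, $q_{\beta}$ and $n_{\beta}$. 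Since both of these describe the one-dimensional normal line to the same real-analytic hypersurface at the same point, the first step is to establish that $M\widehat{\nu}_{\beta}=\lambda\,\widehat{\gamma}_{\beta}$ for some nonzero scalar $\lambda$, after which the two identities are read off by matching components.

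To make that first step precise I would argue geometrically. A velocity $V=[v,\ov{v},\omega,\ov{\omega}]$ is tangent to $\partial\mathscr{P}_2(\mathsf{P}_{\ast})$ at $X_{\beta}$ exactly when it preserves the contact to first order, i.e. when the relative normal velocity of the two material contact points $P$ and $Q$ vanishes. Using the expressions for $\dot{x}_{P}(0)$ and $\dot{x}_{Q}(0)$ in \eqref{kyoo} and the proof of Proposition \ref{taylorstuff}, this condition reads
\begin{equation*}
-n_{\beta}\cdot v+n_{\beta}\cdot\ov{v}-\omega\,(p_{\beta}^{\perp}\cdot n_{\beta})+\ov{\omega}\,(q_{\beta}^{\perp}\cdot n_{\beta})=0,
\end{equation*}
so the Euclidean normal to $\partial\mathscr{P}_2(\mathsf{P}_{\ast})$ at $X_{\beta}$ is a positive multiple of $[-n_{\beta},n_{\beta},-p_{\beta}^{\perp}\cdot n_{\beta},q_{\beta}^{\perp}\cdot n_{\beta}]$, which by \eqref{eetah} is in turn proportional to $M\widehat{\nu}_{\beta}$. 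Combining the inclusion already proved in Proposition \ref{taylorstuff} with property (R3) forces the bounding hyperplane of $\mathcal{V}^{-}(X_{\beta})$ to coincide with the tangent space of $\partial\mathscr{P}_2(\mathsf{P}_{\ast})$ at $X_{\beta}$, which yields $M\widehat{\nu}_{\beta}=\lambda\,\widehat{\gamma}_{\beta}$.

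With this identification, the identities follow by matching blocks. Comparing the two $\mathbb{R}^{2}$-valued (translational) blocks of $M\widehat{\nu}_{\beta}$ and $\widehat{\gamma}_{\beta}$ shows that $n_{\beta}$ is a positive multiple of $e(\psi)-d_{\beta}^{-1}\tfrac{\partial D}{\partial\theta}e(\psi)^{\perp}$; imposing $\|n_{\beta}\|=1$ then determines the scalar and produces \eqref{enn}, the two translational blocks agreeing automatically because they are negatives of one another in both expressions. Feeding the resulting scalar into the comparison of the scalar (rotational) blocks then gives \eqref{pee} from the third component, and the companion value of $q_{\beta}^{\perp}\cdot n_{\beta}$ from the fourth; the relation $q_{\beta}=p_{\beta}-d_{\beta}e(\psi)$ together with \eqref{enn} shows these two scalar comparisons are mutually compatible, which serves as an internal check on the whole computation.

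I expect the main obstacle to be the justification of the first step: that $M\widehat{\nu}_{\beta}$ genuinely spans the geometric normal line, rather than merely determining a half-space that contains $\mathcal{V}^{-}(X_{\beta})$. This is precisely where strict convexity and real-analyticity of $\partial\mathsf{P}_{\ast}$ enter, guaranteeing that $\beta\mapsto(d_{\beta},p_{\beta},n_{\beta})$ and the charts $\phi_{X}$ are smooth, that $\partial\mathscr{P}_2(\mathsf{P}_{\ast})$ is a bona fide hypersurface with a well-defined normal line at $X_{\beta}$, and that the one-sided inclusion of Proposition \ref{taylorstuff} can be upgraded to the equality of half-spaces demanded by (R3). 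Should the reconciliation with Proposition \ref{taylorstuff} prove delicate, an alternative self-contained route is available: by rotation invariance one may take $\vartheta=0$, parametrise $\partial\mathsf{P}_{\ast}$ by its Gauss map through the support function $h$, record that $d_{\beta}e(\psi)$ lies on the boundary of the difference body $\mathsf{P}_{\ast}\oplus(-R(\ov{\vartheta})\mathsf{P}_{\ast})$ whose support function is $h(\phi)+h(\phi+\pi-\ov{\vartheta})$, and differentiate the tangency relations implicitly in $\theta$ and $\psi$; here one finds directly that $n_{\beta}=e(\phi)$ and $p_{\beta}^{\perp}\cdot n_{\beta}=-h'(\phi)$ for the contact normal angle $\phi$, and the only delicate part is the bookkeeping of the implicit differentiation.
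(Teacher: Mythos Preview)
Your proposal is correct and follows the same overall scheme as the paper: show that $M\widehat{\nu}_{\beta}$ and $\widehat{\gamma}_{\beta}$ span the same line in $\mathbb{R}^{6}$, then read off \eqref{enn} from the translational blocks and \eqref{pee} from the rotational block.

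The one genuine difference lies in how the identification $M\widehat{\nu}_{\beta}\parallel\widehat{\gamma}_{\beta}$ is justified. The paper does not argue via the tangent space directly, nor via (R3); instead it invokes \textsc{Ballard}'s existence theory to produce a concrete energy-conserving regular flow $\{E_t\}$, and then argues dynamically: for any $W_{0}$ with $M\widehat{\nu}_{\beta}\cdot W_{0}<0$, Proposition~\ref{taylorstuff} gives $b(x_{Q}(t),t)\geq 0$ in a left neighbourhood of $0$, hence $\tfrac{d}{dt_{-}}F(X(t))|_{t=0}\leq 0$, i.e.\ $W_{0}\cdot\widehat{\gamma}_{\beta}\leq 0$. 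Containment of one open half-space in a closed half-space forces the bounding hyperplanes to coincide, so $M\widehat{\nu}_{\beta}=\widehat{\gamma}_{\beta}$ on $\mathbb{S}^{5}$. Your route---characterising the tangent space of $\partial\mathscr{P}_{2}(\mathsf{P}_{\ast})$ at $X_{\beta}$ as the vanishing locus of the first-order coefficient in the expansion of $b(x_{Q}(t),t)$---is more elementary and avoids appealing to Ballard altogether; the invocation of (R3) in your write-up is in fact unnecessary once you have that characterisation, since both $\widehat{\gamma}_{\beta}$ and $M\widehat{\nu}_{\beta}$ are then seen to annihilate the same $5$-plane. The paper's route, by contrast, packages the ``first-order contact is equivalent to first-order vanishing of $F$'' step inside an existence theorem, which is heavier but sidesteps having to argue directly that tracking the fixed material point $Q$ captures the tangent condition for the moving contact point. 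Your support-function alternative would also work and is closer in spirit to a purely convex-geometric proof.
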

\begin{proof}
To establish the geometric identities \eqref{pee} and \eqref{enn}, we employ an argument which uses properties of dynamics. Specifically, we employ the existence theory of \textsc{Ballard}. By  \cite{ball}, there exists a kinetic energy-conserving regular flow $\{E_{t}^{}\}_{t\in\mathbb{R}}$ on $\mathscr{P}_{2}(\mathsf{P}_{\ast})$. This regular flow admits the property that if $X_{0}\in\partial\mathscr{P}_{2}(\mathsf{P}_{\ast})$ and $V_{0}\in\Sigma_{\beta}^{-}$ (where $\beta\in\mathbb{T}^{3}$ is determined by $X_{0}$), then the map $t\mapsto X(t):=(\Pi_{1}\circ E_{t})(Z_{0})\in C^{1}_{-}(\mathbb{R}, \mathbb{R}^{6})\cap C^{1}_{+}(\mathbb{R}, \mathbb{R}^{6})$ has the property that
\begin{equation*}
F(X(t))\geq 0 \quad \text{in a left neighbourhood of 0},
\end{equation*}
and $F(X(0))=0$. Furthermore, one can check that
\begin{equation*}
b(x_{Q}(t), t)\geq 0 \quad \Longleftrightarrow\quad F(X(t))\leq 0,
\end{equation*}
where $x_{Q}:\mathbb{R}\rightarrow\mathbb{R}^{2}$ is given in \eqref{kyoo} above. By Proposition \ref{taylorstuff}, for any $W_{0}\in\{V\in\mathbb{R}^{6}\,:\,M\widehat{\nu}_{\beta}\cdot V<0\}$, the associated $t\mapsto X_{1}(t):=(\Pi\circ E_{t})(\zeta_{0})$ satisfies $b(x_{Q}(t), t)\geq 0$ for $t$ in a sufficiently-small left neighbourhood of 0, whence
\begin{equation*}
\frac{d}{dt_{-}}F(X_{1}(t))|_{t=0}\leq 0.
\end{equation*}
As such, the given $W_{0}$ satisfies the inequality $V_{0}\cdot\widehat{\gamma}_{\beta}\leq 0$. This is only possible if the open half spaces in $\mathbb{R}^{6}$ coincide with one another. This holds if and only if $M\widehat{\nu}_{\beta}=\widehat{\gamma}_{\beta}$ in $\mathbb{S}^{5}$ for all $\beta\in\mathbb{T}^{2}$, and so we are done.
\end{proof}
We shall use the result of this Proposition when defining the so-called {\em Monge-Amp\`{e}re scattering problem}, a constrained PDE problem for the systematic derivation of {\em scattering maps}, which we define formally in the following section.
\subsection{From Regular Flows to Scattering Maps}
We now define the central object of interest in this work. Thereafter, we shall focus much of our attention deriving an appropriate set of PDE which governs it.
\begin{defn}\label{scatteringdefn}
Suppose $\{T_{t}\}_{t\in\mathbb{R}}$ is a regular flow on $\mathscr{D}_{2}(\mathsf{P}_{\ast})$. For a given $X\in\partial\mathscr{P}_{2}(\mathsf{P}_{\ast})$, its associated {\bf scattering map} $\sigma(\cdot; X):\mathcal{V}^{-}(X)\rightarrow\mathcal{V}^{+}(X)$ is defined pointwise by
\begin{equation}\label{scatmapdef}
\sigma(V; X)=\lim_{t\rightarrow 0+}(\Pi_{2}\circ T_{t})(Z) \quad \text{for}\hspace{2mm}V\in\mathcal{V}^{-}(X),
\end{equation}
where $Z:=[X, V]$.
\end{defn}
Note that as $\{T_{t}\}_{t\in\mathbb{R}}$ has the property (R2), it follows that $\sigma(\cdot; X):\mathcal{V}^{-}(X)\rightarrow\mathcal{V}^{+}(X)$ is a bijection. 
Due to the fact that weak solutions of Newton's equations of motion \eqref{newty} have the property that they are both translation invariant and rotation covariant, we have the following structural formula for scattering maps associated with regular flows:
\begin{lem}\label{symscat}
Suppose the regular flow $\{T_{t}\}_{t\in\mathbb{R}}$ has the property that $t\mapsto T_{t}Z_{0}$ is a global-in-time weak solution of \eqref{newty} for every $Z_{0}\in\mathscr{D}_{2}(\mathsf{P}_{\ast})$. For any $y\in\mathbb{R}^{2}$, let $\mathsf{T}_{y}:\mathscr{P}_{2}(\mathsf{P}_{\ast})\rightarrow\mathscr{P}_{2}(\mathsf{P}_{\ast})$ denote the translation operator 
\begin{equation*}
\mathsf{T}_{y}X:=\left[
\begin{array}{c}
x+y \\
\ov{x}+y \\
\vartheta \\
\ov{\vartheta}
\end{array}
\right]\quad \text{for}\hspace{2mm}X\in\mathscr{P}_{2}(\mathsf{P}_{\ast}),
\end{equation*}
and for any $\alpha\in\mathbb{S}^{1}$, let $\mathsf{R}_{\alpha}:\mathscr{P}_{2}(\mathsf{P}_{\ast})\rightarrow\mathscr{P}_{2}(\mathsf{P}_{\ast})$ denote the `rotation' operator
\begin{equation*}
\mathsf{R}_{\alpha}Z:=\left[
\begin{array}{c}
R(\alpha)x \\
R(\alpha)\ov{x} \\
R(\alpha)v \\
R(\alpha)\ov{v} \\
\vartheta+\alpha \\
\ov{\vartheta}+\alpha \\
\omega \\
\ov{\omega}
\end{array}
\right] \quad \text{for}\hspace{2mm}Z=[X, V].
\end{equation*}
The family of scattering maps $\{\sigma(\cdot; X)\}_{X\in\partial\mathscr{P}_{2}(\mathsf{P}_{\ast})}$ admits the following symmetries:
\begin{equation*}
\sigma(V; \mathsf{T}_{y}X)=\sigma(V; X) \quad \text{for all}\hspace{2mm}y\in\mathbb{R}^{2} 
\end{equation*}
and
\begin{equation*}
R(\alpha)^{T}\sigma((\Pi_{2}\circ\mathsf{R}_{\alpha})(Z); (\Pi_{1}\circ \mathsf{R}_{\alpha})(Z))=\sigma(V; X)\quad \text{for all}\hspace{2mm}\alpha\in\mathbb{S}^{1}.
\end{equation*}
\end{lem}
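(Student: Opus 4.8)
The plan is to reduce both identities to covariance of the flow, i.e. to the relations $T_t\circ\mathsf{T}_y=\mathsf{T}_y\circ T_t$ and $T_t\circ\mathsf{R}_\alpha=\mathsf{R}_\alpha\circ T_t$ on $\mathscr{D}_2(\mathsf{P}_\ast)$; here I extend $\mathsf{T}_y$ to phase space by $\mathsf{T}_y[X,V]:=[\mathsf{T}_yX,V]$, so that $\Pi_1\circ\mathsf{T}_y=\mathsf{T}_y\circ\Pi_1$ and $\Pi_2\circ\mathsf{T}_y=\Pi_2$, and I abbreviate $\mathcal{R}_\alpha:=\mathrm{diag}(R(\alpha),R(\alpha),1,1)$ for the induced rotation on velocity space $\mathbb{R}^6$, so that $\Pi_2\circ\mathsf{R}_\alpha=\mathcal{R}_\alpha\circ\Pi_2$. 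The first thing I would verify is that the notion of global-in-time physical weak solution of Definition \ref{physweak} is itself invariant under $\mathsf{T}_y$ and covariant under $\mathsf{R}_\alpha$. If $t\mapsto X(t)$ solves $M\ddot X=\mu$ in the distributional sense, then since $\int_{\mathbb{R}}\phi''(t)\,dt=0$ for all $\phi\in C^\infty_c(\mathbb{R},\mathbb{R}^6)$ the constant displacement $[y,y,0,0]$ drops out of the weak formulation, so $t\mapsto\mathsf{T}_yX(t)$ satisfies the same equation with the \emph{identical} Radon measure $\mu$; the collision-time set, the support alternative \eqref{suppy}, and conservation of linear momentum and of kinetic energy are manifestly unaffected, while conservation of angular momentum is preserved because $\mathrm{AM}(a,t,\mathsf{T}_yZ_0)=\mathrm{AM}(a-y,t,Z_0)$ and \eqref{am} is assumed for every base point $a$. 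Using that $R(\alpha)$ is a Euclidean isometry, an analogous computation shows that $t\mapsto\mathsf{R}_\alpha Z(t)$ is a physical weak solution issuing from $\mathsf{R}_\alpha Z_0$.

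The delicate step — which I expect to be the main obstacle — is to promote this invariance of the \emph{set} of weak solutions to genuine commutation of the flow, since weak solutions are in general non-unique (indeed this is precisely the phenomenon established in this article). The mechanism to exploit is that the collision geometry transforms covariantly: one has $\beta(\mathsf{T}_yX)=\beta(X)$ and $\beta(\mathsf{R}_\alpha X)=\beta(X)+(\alpha,\alpha,\alpha)$, whence the admissible half-spaces satisfy $\mathcal{V}^{\pm}(\mathsf{T}_yX)=\mathcal{V}^{\pm}(X)$ and $\mathcal{V}^{\pm}(\mathsf{R}_\alpha X)=\mathcal{R}_\alpha\mathcal{V}^{\pm}(X)$ (cf. the normal computed in Proposition \ref{taylorstuff}). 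Consequently the collision law carried by the flow is mapped to itself by each symmetry, so that the symmetry-image of $t\mapsto T_tZ_0$ and the flow trajectory $t\mapsto T_t(\mathsf{T}_yZ_0)$ (respectively $T_t(\mathsf{R}_\alpha Z_0)$) are two weak solutions emanating from the same initial datum and governed, near every collision, by the same law in the real-analytic category. Invoking the uniqueness afforded by the existence theory of \textsc{Ballard} \cite{ball} for this fixed collision law then forces the two curves to coincide for all $t\in\mathbb{R}$, which is exactly the desired commutation; the point requiring the most care is to check that the flow's law really is prescribed uniformly over $\partial\mathscr{P}_2(\mathsf{P}_\ast)$ by the same geometric data, for it is this uniformity that makes the transported law coincide with the original.

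Granting the commutation relations, I would finish by applying $\Pi_2$ to them at $Z=[X,V]$ and letting $t\to0+$. For translation, $\Pi_2\circ\mathsf{T}_y=\Pi_2$ gives $\sigma(V;\mathsf{T}_yX)=\lim_{t\to0+}(\Pi_2\circ T_t\circ\mathsf{T}_y)(Z)=\lim_{t\to0+}(\Pi_2\circ\mathsf{T}_y\circ T_t)(Z)=\lim_{t\to0+}(\Pi_2\circ T_t)(Z)=\sigma(V;X)$. For rotation, $\Pi_2\circ\mathsf{R}_\alpha=\mathcal{R}_\alpha\circ\Pi_2$ gives $\sigma\big((\Pi_2\circ\mathsf{R}_\alpha)(Z);(\Pi_1\circ\mathsf{R}_\alpha)(Z)\big)=\lim_{t\to0+}(\Pi_2\circ T_t\circ\mathsf{R}_\alpha)(Z)=\mathcal{R}_\alpha\lim_{t\to0+}(\Pi_2\circ T_t)(Z)=\mathcal{R}_\alpha\,\sigma(V;X)$, and left-multiplying by $\mathcal{R}_\alpha^{-1}$ — which is the block action denoted $R(\alpha)^T$ in the statement — yields the stated identity. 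Throughout I would keep track of the fact that $V\in\mathcal{V}^{-}(X)$ if and only if $\mathcal{R}_\alpha V\in\mathcal{V}^{-}(\mathsf{R}_\alpha X)$, so that the compositions above are well-defined.
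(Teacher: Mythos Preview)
The paper's own ``proof'' is a single sentence: \emph{``This is a simple exercise which is left to the reader''}, prefaced by the remark that weak solutions of \eqref{newty} are translation invariant and rotation covariant. Your write-up is therefore already much more careful than what the paper offers, and your step-by-step reduction (covariance of the solution class, followed by covariance of the flow, followed by the limit computation via $\Pi_2$) is the natural architecture.

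That said, there is a genuine gap at the step you yourself flag as delicate. Invoking \textsc{Ballard}'s uniqueness to upgrade ``the symmetry-image of a trajectory is again a weak solution'' to ``the flow commutes with the symmetry'' is circular. Ballard's theorem (as restated in Theorem~\ref{preciseball}) gives uniqueness \emph{relative to a fixed scattering family}. The flow $T_t$ is governed at $\mathsf{T}_yX$ by the law $\sigma(\cdot;\mathsf{T}_yX)$, whereas the transported trajectory $\mathsf{T}_y(T_tZ_0)$ arrives at $\mathsf{T}_yX$ having been scattered by $\sigma(\cdot;X)$; for Ballard to identify the two curves you must already know these laws coincide, which is exactly the content of the lemma. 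Your sentence ``the collision law carried by the flow is mapped to itself by each symmetry'' presupposes the conclusion. In fact, nothing in (R1)--(R5) or in Definition~\ref{physweak} excludes a regular flow whose scattering at $X$ depends on the absolute position $x$ and not merely on $\beta(X)$: one could (at least at the level of the stated hypotheses) splice two distinct physical scattering families across a hypersurface in $\partial\mathscr{P}_2(\mathsf{P}_\ast)$, and every trajectory would still conserve momentum, angular momentum and energy at each collision, yet translation invariance of $\sigma(\cdot;X)$ would fail. The paper appears to take the commutation $T_t\circ\mathsf{T}_y=\mathsf{T}_y\circ T_t$ (equivalently, dependence of $\sigma(\cdot;X)$ on $X$ only through $\beta(X)$) as an implicit structural hypothesis on the flow; once one grants this, your final paragraph finishes the job cleanly.
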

\begin{proof}
This is a simple exercise which is left to the reader.
\end{proof}
It follows quickly from the above observation that the scattering family $\{\sigma(\cdot; X)\}_{X\in\partial\mathscr{P}_{2}(\mathsf{P}_{\ast})}$ is generated by only 2 spatial parameters, namely $(\theta, \psi)\in\mathbb{T}^{2}$. To see this, we require the following definition.
\begin{defn}[Reference Scattering Family]
Let $\{T_{t}\}_{t\in\mathbb{R}}$ be a regular flow. Suppose $X\in\mathscr{P}_{2}(\mathsf{P}_{\ast})$ corresponds to the reference collision configuration given by
\begin{equation*}
X=\left(
\begin{array}{c}
0 \\
0 \\
d_{\theta}(\psi)\cos\psi \\
d_{\theta}(\psi)\sin\psi 
\end{array}
\right)
\end{equation*}
for some $(\theta, \psi)\in\mathbb{T}^{2}$. If $\mathcal{V}_{(\theta, \psi)}^{-}:=\mathcal{V}^{-}(X)$ and $\mathcal{V}_{(\theta, \psi)}^{+}:=\mathcal{V}^{+}(X)$, we define the {\bf reference scattering map} $\sigma_{(\theta, \psi)}:\mathcal{V}_{(\theta, \psi)}^{-}\rightarrow\mathcal{V}_{(\theta, \psi)}^{+}$ corresponding to $(\theta, \psi)$ by
\begin{equation*}
\sigma_{(\theta, \psi)}(V):=\sigma(V; X) \quad
\end{equation*}
for each $V\in\mathcal{V}_{(\theta, \psi)}^{-}$. The collection $\{\sigma_{(\theta, \psi)}\}_{(\theta, \psi)\in\mathbb{T}^{2}}$ is said to the {\em reference scattering family} associated to the regular flow $\{T_{t}\}_{t\in\mathbb{R}}$. 
\end{defn}
It is more convenient to work with the following in the sequel.
\begin{defn}[Canonical Scattering Family]\label{canon}
Let $\{T_{t}\}_{t\in\mathbb{R}}$ be a regular flow. Suppose $X\in\mathscr{P}_{2}(\mathsf{P}_{\ast})$ corresponds to the collision configuration given by
\begin{equation*}
X=\left(
\begin{array}{c}
0 \\
0 \\
d_{\beta}\cos\psi \\
d_{\beta}\sin\psi 
\end{array}
\right)
\end{equation*}
for some $\beta=(\vartheta, \ov{\vartheta}, \psi)\in\mathbb{T}^{3}$. If $\mathcal{V}_{\beta}^{-}:=\mathcal{V}^{-}(X)$ and $\mathcal{V}_{\beta}^{+}:=\mathcal{V}^{+}(X)$, we define the {\bf canonical scattering map} $\sigma_{\beta}:\mathcal{V}_{\beta}^{-}\rightarrow\mathcal{V}_{\beta}^{+}$ corresponding to $\beta$ by
\begin{equation*}
\sigma_{\beta}(V):=R(\vartheta)\sigma_{(\ov{\vartheta}-\vartheta, \psi)}(R(\vartheta)^{T}V)
\end{equation*}
for each $V\in\mathcal{V}_{\beta}^{-}$. The collection $\{\sigma_{\beta}\}_{\beta\in\mathbb{T}^{3}}$ is said to the {\em canonical scattering family} associated to the regular flow $\{T_{t}\}_{t\in\mathbb{R}}$. 
\end{defn}
With these definitions in place, we state the following corollary of lemma \ref{symscat}.
\begin{cor}
Suppose $\{T_{t}\}_{t\in\mathbb{R}}$ is a regular flow, and $\{\sigma(\cdot; X)\}_{X\in\partial\mathscr{P}_{2}(\mathsf{P}_{\ast})}$ its associated scattering family. For any $X\in\partial\mathscr{P}_{2}(\mathsf{P}_{\ast})$, it follows that
\begin{equation*}
\sigma(\cdot; X)=\sigma_{\beta} \quad \text{as maps on}\hspace{2mm}\mathcal{V}_{\beta}^{-},
\end{equation*}
where $\beta=\beta(X)$ is given by $(\vartheta, \ov{\vartheta}, \psi(X))$, where $\psi(X):=\mathrm{Arctan}[(\ov{x}_{2}-x_{2})/(\ov{x}_{1}-x_{1})]$.
\end{cor}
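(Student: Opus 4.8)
The plan is to read off the Corollary as a direct consequence of the two symmetries recorded in Lemma \ref{symscat}, combined with the definitions of the reference and canonical scattering families; no new analysis is needed, only a careful bookkeeping of the spatial parameters. The strategy is to transport an arbitrary collision configuration $X = [x, \ov{x}, \vartheta, \ov{\vartheta}] \in \partial\mathscr{P}_{2}(\mathsf{P}_{\ast})$ to the canonical configuration associated with $\beta(X)$ in two steps: first a translation sending the first particle to the origin, and then a global rotation normalising the rotational phase of the first particle to zero, which turns $X$ into the reference configuration underlying $\sigma_{(\ov{\vartheta}-\vartheta,\, \cdot)}$.

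First I would invoke translation invariance. Writing $\psi(X) := \mathrm{Arctan}[(\ov{x}_{2}-x_{2})/(\ov{x}_{1}-x_{1})]$ for the argument of the separation vector $\ov{x}-x$, and noting that membership $X \in \partial\mathscr{P}_{2}(\mathsf{P}_{\ast})$ forces $|\ov{x}-x|$ to equal the distance of closest approach $d_{\beta(X)} = D(\ov{\vartheta}-\vartheta, \psi(X)-\vartheta)$, the translate $\mathsf{T}_{-x}X$ is exactly the configuration $(0,0,d_{\beta}\cos\psi, d_{\beta}\sin\psi, \vartheta, \ov{\vartheta})$ of Definition \ref{canon}. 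By the first identity of Lemma \ref{symscat}, $\sigma(V; X) = \sigma(V; \mathsf{T}_{-x}X)$ for every admissible $V$, so it suffices to treat the translated configuration.

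Next I would apply rotation covariance with angle $\alpha = -\vartheta$ to the state $Z = [\mathsf{T}_{-x}X, V]$. Because the distance of closest approach is invariant under global rotations of the system (that is, $d_{\beta}=d_{\beta+\beta'}$), the rotated spatial part $(\Pi_{1}\circ\mathsf{R}_{-\vartheta})(Z)$ again lies on $\partial\mathscr{P}_{2}(\mathsf{P}_{\ast})$, now with vanishing phase for the first particle and relative rotation $\ov{\vartheta}-\vartheta$; this is precisely the reference collision configuration defining $\sigma_{(\ov{\vartheta}-\vartheta,\, \cdot)}$. The velocity argument becomes $(\Pi_{2}\circ\mathsf{R}_{-\vartheta})(Z) = R(\vartheta)^{T}V$ in the block sense of $\Pi_{2}\circ\mathsf{R}$, where the linear velocities rotate by $R(\vartheta)^{T}$ and the angular velocities are untouched. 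The second identity of Lemma \ref{symscat} then yields $\sigma(V; \mathsf{T}_{-x}X) = R(\vartheta)\,\sigma_{(\ov{\vartheta}-\vartheta,\, \cdot)}(R(\vartheta)^{T}V)$, which is exactly the right-hand side of Definition \ref{canon}; hence $\sigma(V; X) = \sigma_{\beta(X)}(V)$. To close the argument one must also check that the domains agree, namely that the block rotation $V \mapsto R(\vartheta)^{T}V$ carries the half-space $\mathcal{V}_{\beta}^{-}$ onto the reference half-space; this is immediate from Proposition \ref{taylorstuff}, since the bounding inner normal $M\widehat{\nu}_{\beta}$ is built covariantly from the collision data $(n_{\beta}, p_{\beta}, q_{\beta})$, which rotate consistently with the configuration.

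The only point requiring genuine care — rather than any real obstacle — is the bookkeeping of the separation angle under $\mathsf{R}_{-\vartheta}$: one must confirm that the rotation-invariance identity $d_{\beta} = D(\ov{\vartheta}-\vartheta, \psi-\vartheta)$ places the rotated configuration at the correct distance of closest approach, so that it is a genuine point of the boundary manifold and not an interior or exterior state, and that the second slot of the reference parameter produced by $\mathsf{R}_{-\vartheta}$ matches the second slot prescribed in Definition \ref{canon}. Once this matching is verified, the Corollary follows by substitution, with the $\mathrm{Arctan}$ formula serving only to identify $\psi(X)$ from the raw components of $X$.
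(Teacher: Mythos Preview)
Your proposal is correct and is precisely the argument the paper has in mind: the result is stated there without proof, simply as a corollary of Lemma \ref{symscat}, and your write-up is nothing more than a careful unwinding of that lemma's translation invariance followed by its rotation covariance, together with the definitions of the reference and canonical scattering families. There is no alternative route to compare.
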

Together with the general existence theory of \textsc{Ballard} \cite{ball} for the dynamics of real-analytic hard particles, these observations allow us to reduce the study of regular flows to a study of their associated scattering maps.
\subsection{Some Properties of Canonical Scattering Families for Physical Regular Flows}
In what follows, we work in the smaller class of physical regular flows as defined in \ref{physregflow} above. We now gather some elementary properties of the canonical scattering family associated with a given physical regular flow.
\begin{rem}
In what follows, in line with the kinetic theory convention, post-collisional velocities and angular speeds in $\mathcal{V}^{+}_{\beta}$ shall be adorned with a prime $'$, with their pre-collisional counterparts in $\mathcal{V}^{-}_{\beta}$ remaining unprimed. In particular, $V_{\beta}'\in\mathcal{V}^{+}_{\beta}$ is written as $[v_{\beta}'. \ov{v}_{\beta}', \omega_{\beta}', \ov{\omega}_{\beta}']$.
\end{rem}
We appeal to the formalism of \textsc{Truesdell} \cite{truesdell}. Suppose we denote by $U:\mathbb{R}^{2}\times\mathbb{R}\times\mathscr{D}_{2}(\mathsf{P}_{\ast})\rightarrow\mathbb{R}^{2}$ the map that assigns a linear velocity to each material point in a body in space. Let $\{T_{t}\}_{t\in\mathbb{R}}$ be a physical regular flow on $\mathscr{D}_{2}(\mathsf{P}_{\ast})$, and let $Z_{0}\in\mathscr{D}_{2}(\mathsf{P}_{\ast})$ be given. At any collision time $\tau\in\mathcal{T}(Z_{0})$, by Euler's First Law we enforce the conservation of linear momentum
\begin{align}
\lim_{t\rightarrow \tau-}\int_{\mathsf{P}(t)}U(y, t; Z_{0})\,dy+\lim_{t\rightarrow\tau-}\int_{\ov{\mathsf{P}}(t)}U(y, t; Z_{0})\,dy \vspace{2mm}\notag \\
 =\lim_{t\rightarrow \tau+}\int_{\mathsf{P}(t)}U(y, t; Z_{0})\,dy+\lim_{t\rightarrow\tau+}\int_{\ov{\mathsf{P}}(t)}U(y, t; Z_{0})\,dy, \tag{COLM}
\end{align}
which reduces to
\begin{equation}\label{colm}
mv_{\beta}'+m\overline{v}_{\beta}'=mv+m\overline{v}.
\end{equation}
Let a `point of measurement' $a\in\mathbb{R}^{2}$ in the rigid set domain be given. Euler's Second Law enforces the conservation of angular momentum with respect to the point $a$ is written as
\begin{align}
\lim_{t\rightarrow \tau-}\int_{\mathsf{P}(t)}(y-a)^{\perp}\cdot U(y, t; Z_{0})\,dy+\lim_{t\rightarrow \tau-}\int_{\ov{\mathsf{P}}(t)}(y-a)^{\perp}\cdot U(y, t; Z_{0})\,dy \notag \vspace{2mm} \\
 =\lim_{t\rightarrow \tau+}\int_{\mathsf{P}(t)}(y-a)^{\perp}\cdot U(y, t; Z_{0})\,dy+\lim_{t\rightarrow \tau+}\int_{\ov{\mathsf{P}}(t)}(y-a)^{\perp}\cdot U(y, t; Z_{0})\,dy, \tag{COAM}
\end{align}
which by using \eqref{colm} reduces to
\begin{align}\label{coam}
J\omega_{\beta}'+md^{\overline{\vartheta}}_{\vartheta}(\psi)e(\psi))^{\perp}\cdot\overline{v}_{\beta}'+J\overline{\omega}_{\beta}' = J\omega+md^{\overline{\vartheta}}_{\vartheta}(\psi)e(\psi))^{\perp}\cdot\overline{v}+J\overline{\omega}.
\end{align}
Finally, Euler's Laws in the absence of external forces yields conservation of kinetic energy,
\begin{align}
\lim_{t\rightarrow \tau-}\frac{1}{2}\int_{\mathsf{P}(t)}|U(y, t; Z_{0})|^{2}\,dy+\lim_{t\rightarrow\tau-}\frac{1}{2}\int_{\ov{\mathsf{P}}(t)}|U(y, t; Z_{0})|^{2}\,dy \vspace{2mm}\\
 \lim_{t\rightarrow \tau+}\frac{1}{2}\int_{\mathsf{P}(t)}|U(y, t; Z_{0})|^{2}\,dy+\lim_{t\rightarrow\tau+}\frac{1}{2}\int_{\ov{\mathsf{P}}(t)}|U(y, t; Z_{0})|^{2}\,dy, \tag{COKE}
\end{align}
which reduces to
\begin{equation}\label{coke}
m|v_{\beta}'|^{2}+J(\omega_{\beta}')^{2}+m|\overline{v}_{\beta}'|^{2}+J(\overline{\omega}_{\beta}')^{2}
= m|v|^{2}+J\omega^{2}+m|\overline{v}|^{2}+J\overline{\omega}^{2}.
\end{equation}
It is now convenient to rewrite the conservation laws \eqref{colm}, \eqref{coam} and \eqref{coke} in the terms of scattering map notation for the physical regular flow $\{T_{t}\}_{t\in\mathbb{R}}$. Indeed, if $\sigma_{\beta}$ is to conserve total kinetic energy, then we have
\begin{equation}\label{scatcoke}
|M\sigma_{\beta}(V)|^{2}=|MV|^{2} \quad \text{for all}\hspace{2mm} V\in\mathbb{R}^{6},
\end{equation}
where $M\in\mathbb{R}^{6\times 6}$ is the mass-inertia matrix \eqref{mi} above. The conservation law \eqref{colm} can be recast in the form
\begin{equation}\label{scatcolm}
\widehat{E}_{1}\cdot \sigma_{\beta}(V)=\widehat{E}_{1}\cdot V \quad \text{and}\quad\widehat{E}_{2}\cdot \sigma_{\beta}(V)=\widehat{E}_{2}\cdot V\quad \text{for all}\hspace{2mm} V\in\mathbb{R}^{6},
\end{equation}
where $\widehat{E}_{1}, \widehat{E}_{2}\in\mathbb{R}^{6}$ are the unit vectors
\begin{equation}\label{eeee}
\widehat{E}_{1}:=\frac{1}{\sqrt{2}}\left[
\begin{array}{c}
1 \\ 0 \\ 1 \\ 0 \\ 0 \\ 0
\end{array}
\right]\quad \text{and}\quad\widehat{E}_{2}:=\frac{1}{\sqrt{2}}\left[
\begin{array}{c}
0 \\ 1 \\ 0 \\ 1 \\ 0 \\ 0
\end{array}
\right]
\end{equation}
Moreover, \eqref{coam} becomes
\begin{equation}\label{scatcoam}
\widehat{\Gamma}_{\beta}\cdot \sigma_{\beta}(V)=\widehat{\Gamma}_{\beta}\cdot V \quad \text{for all}\hspace{2mm} V\in\mathbb{R}^{6},
\end{equation}
where $\widehat{\Gamma}_{\beta}\in\mathbb{R}^{6}$ is the unit vector
\begin{equation}\label{gammavector}
\widehat{\Gamma}_{\beta}:=\frac{1}{\sqrt{m^{2}d_{\beta}^{2}+2J^{2}}}\left[
\begin{array}{c}
0 \\
md_{\beta}e(\psi)^{\perp} \\
J \\
J
\end{array}
\right].
\end{equation}
We shall term a canonical scattering map $\sigma_{\beta}$ which satisfies the algebraic identities above a {\bf physical} canonical scattering map. We begin with the following elementary observation.
\begin{prop}
Suppose $\{T_{t}\}_{t\in\mathbb{R}}$ is a physical regular flow. Its associated canonical scattering family $\{\sigma_{\beta}\}_{\beta\in\mathbb{T}^{3}}$ has the following properties:
\begin{enumerate}[{\em (S1)}]
\item $\sigma_{\beta}:\mathcal{V}_{\beta}^{-}\rightarrow\mathcal{V}_{\beta}^{+}$ is a $C^{1}$ diffeomorphism;
\item $\sigma_{\beta}$ is a classical solution of either the Jacobian PDE
\begin{equation}\label{negone}
\mathrm{det}(D\sigma_{\beta}(V))=-1
\end{equation}
or 
\begin{equation}\label{posone}
 \mathrm{det}(D\sigma_{\beta}(V))=1
\end{equation}
for $V\in\mathrm{int}\,\mathcal{V}_{\beta}^{-}$;
\item $\sigma_{\beta}$ satisfies the conservation laws \eqref{scatcolm}, \eqref{scatcoam} and \eqref{scatcoke}.
\end{enumerate} 
\end{prop}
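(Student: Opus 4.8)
The plan is to read off all three properties by unwinding the definitions and invoking the regularity already built into a regular flow, so that the only genuine content is the sign-rigidity in (S2). Throughout I work at a parameter $\beta=(\vartheta,\ov\vartheta,\psi)$ for which properties (R3) and (R4) hold at the associated reference configuration; by hypothesis this is the case for almost every $\beta\in\mathbb{T}^{3}$, and this is the sense in which the three assertions are to be understood.

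For (S1), I first observe that the scattering map of Definition \ref{scatteringdefn} is literally the map $s(\cdot\,;X)$ appearing in property (R4) (that each such map is a Lebesgue measure-preserving $C^{1}$ diffeomorphism): both are given by $V\mapsto\lim_{t\to 0+}(\Pi_{2}\circ T_{t})(Z)$ with $Z=[X,V]$. Hence $\sigma_{(\ov\vartheta-\vartheta,\psi)}=\sigma(\cdot\,;X)$ is a measure-preserving $C^{1}$ diffeomorphism of $\mathcal{V}^{-}$ onto $\mathcal{V}^{+}$. Since Definition \ref{canon} produces $\sigma_{\beta}$ by conjugating $\sigma_{(\ov\vartheta-\vartheta,\psi)}$ with the block rotation $R(\vartheta)$ acting orthogonally on $\mathbb{R}^{6}$ (rotating the two linear-velocity blocks and fixing the angular speeds, exactly as $\mathsf{R}_{\alpha}$ does in Lemma \ref{symscat}), the composition $\sigma_{\beta}=R(\vartheta)\circ\sigma_{(\ov\vartheta-\vartheta,\psi)}\circ R(\vartheta)^{T}$ is again a $C^{1}$ diffeomorphism $\mathcal{V}_{\beta}^{-}\to\mathcal{V}_{\beta}^{+}$, which is (S1). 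Because $R(\vartheta)$ is an isometry of determinant $1$, $\sigma_{\beta}$ is moreover measure-preserving and has the same pointwise Jacobian sign as $\sigma_{(\ov\vartheta-\vartheta,\psi)}$.

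For (S2), measure preservation together with the change-of-variables formula forces $|\det D\sigma_{\beta}(V)|=1$ for almost every $V$; continuity of $V\mapsto D\sigma_{\beta}(V)$, which holds since $\sigma_{\beta}\in C^{1}$, upgrades this to every $V\in\mathrm{int}\,\mathcal{V}_{\beta}^{-}$. Thus the continuous scalar $V\mapsto\det D\sigma_{\beta}(V)$ takes values in the two-point set $\{-1,+1\}$. By property (R3) the domain $\mathcal{V}_{\beta}^{-}$ is homeomorphic to a closed half-space, so its interior is connected, and the intermediate value theorem then pins the determinant to a single value, giving either \eqref{negone} or \eqref{posone} globally on $\mathrm{int}\,\mathcal{V}_{\beta}^{-}$. \textbf{This dichotomy is the crux of the statement}: measure preservation alone yields only $|\det|=1$, and it is the combination of $C^{1}$-regularity with connectivity of the velocity half-space that rules out a sign change and thereby sorts each $\sigma_{\beta}$ into the elliptic or hyperbolic class exploited later.

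For (S3), I localize the global conservation laws of Definition \ref{physregflow} to a single collision. Fixing $Z_{0}$ with a collision time $\tau\in\mathcal{T}(Z_{0})$, between collisions the motion obeys \eqref{nemo}, so $\dot v=\dot\omega=0$ and the velocity is piecewise constant; hence the one-sided limits of $\Pi_{2}\circ T_{t}$ at $\tau$ are exactly the pre-collisional $V$ and the post-collisional $\sigma_{\beta}(V)$. Since the functionals $\mathrm{LM}$, $\mathrm{AM}$ and $\mathrm{KE}$ are constant in $t$ by \eqref{lm}, \eqref{am} and \eqref{ke}, equating their left and right limits at $\tau$ reproduces \eqref{colm}, \eqref{coam} and \eqref{coke}, and rewriting these in the inner-product form already carried out in the preceding text yields precisely \eqref{scatcolm}, \eqref{scatcoam} and \eqref{scatcoke}, which is (S3). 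The translation invariance and rotation covariance recorded in Lemma \ref{symscat} ensure that these relations, once verified for the reference map, pass unchanged to every $\sigma_{\beta}$, completing the plan.
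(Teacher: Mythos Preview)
Your argument is correct and is precisely the unpacking of definitions that the paper's one-line proof (``This follows simply from the definition of physical regular flow on $\mathscr{D}_{2}(\mathsf{P}_{\ast})$'') leaves to the reader. In particular, your connectivity argument for the sign-rigidity in (S2) and the localisation of the global conservation laws at a single collision time for (S3) are exactly the details one would supply to justify that sentence.
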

\begin{proof}
This follows simply from the definition of physical regular flow on $\mathscr{D}_{2}(\mathsf{P}_{\ast})$.
\end{proof}
The above proposition states that any canonical scattering map $\sigma_{\beta}$ is a classical solution of a Jacobian partial differential equation on the open half-space $\mathcal{V}_{\beta}^{-}$. Note that knowing $\{T_{t}\}_{t\in\mathbb{R}}$ to be only a physical regular flow is not enough information for us to deduce boundary conditions for these PDE on the boundary hyperplane 
\begin{equation*}
\{V\in\mathbb{R}^{6}\,:\,V\cdot M\widehat{\nu}_{\beta}=0\}.
\end{equation*}
It is precisely this degree of freedom in the scattering problem that gives rise to non-uniqueness of physical regular flows associated to {\em linear} scattering maps in section \ref{maspy} below.
\subsection{\label{sec:level2}Derivation of Physical Scattering Maps}
In this section, before proceding to the set-up and analysis of so-called Monge-Amp\`{e}re scattering problems, we follow the approach of the work of \textsc{Fr\'{e}mond} \cite{fremond} and \textsc{Palffy-Muhoray, Virga, Wilkinson and Zheng} \cite{palffy2017paradox} in the derivation of physical scattering maps for compact, strictly convex sets with boundary curves of class $\mathscr{C}^{1}$. In particular, for a given collision configuration $\beta\in\mathbb{T}^{3}$, we adopt the {\em impulse ansatz} that allows one to close the system of algebraic equations \eqref{scatcolm}, \eqref{scatcoam} and \eqref{scatcoke} for the unknown post-collisional velocities $v_{\beta}', \ov{v}_{\beta}', \omega_{\beta}'$ and $\ov{\omega}_{\beta}'$ in $\mathcal{V}_{\beta}^{+}$, given pre-collisional velocities $v, \ov{v}, \omega$ and $\ov{\omega}$ in $\mathcal{V}_{\beta}^{-}$. In this article, we show that this oft-employed ansatz is sufficient, but not necessary, to guarantee the non-interpenetration of two compact, strictly convex sets.
\subsubsection{The Algebraic Physical Scattering Problem for $\mathscr{P}_{2}(\mathsf{P}_{\ast})$}
Suppose $\beta\in\mathbb{T}^{3}$ and $V\in\mathcal{V}^{-}_{\beta}$ have been given. By writing the unknown velocity components in a vector $\xi\in\mathbb{R}^{6}$, it is straightforward to see that \eqref{colm}, \eqref{coam} and \eqref{coke} are equivalent to the 4 algebraic equations
\begin{align}
\mathrm{(A)}\quad \left\{
\begin{array}{c}
|M\xi|^{2}=\rho^{2}\vspace{2mm}\notag\\
c_{i}\cdot \xi=d_{i} \quad \text{for}\hspace{2mm}i=1, 2, 3,
\end{array}
\right. \notag
\end{align}
where $\rho, c_{i}$ and $d_{i}$ are given by
\begin{align}
\rho:=|MV|^{2}, \vspace{2mm}\notag \\
c_{1}:=\widehat{E}_{1} \qquad d_{1}:=\widehat{E}_{1}\cdot V, \vspace{2mm}\notag \\
c_{2}:=\widehat{E}_{2} \qquad d_{2}:=\widehat{E}_{2}\cdot V, \vspace{2mm}\notag \\
c_{3}:=\widehat{\Gamma}_{\beta} \qquad d_{1}:=\widehat{\Gamma}_{\beta}\cdot V.\notag
\end{align}
This problem appears to be underdetermined in the sense that, ostensibly, there are too few algebraic equations\footnote{We have found, whilst discussing this problem with collaborators, that counting equations is indeed a poor approach to the derivation of scattering maps. Indeed, with a sphere in $\mathbb{R}^{M}$ and one of its tangent planes in mind, it is possible to study $M$ linear equations and 1 nonlinear equation which admit a unique solution in $\mathbb{R}^{M}$.} to solve uniquely for the unknown $\xi\in\mathbb{R}^{6}$. Indeed, in the proof of theorem below we show there exists an {\em uncountable} family of solutions of (A) indexed by elements of the real projective line $\mathbb{RP}^{1}$. For the moment, however, we assume there exists an {\em impulse parameter} $\alpha=\alpha(V; \beta)\in\mathbb{R}$ such that
\begin{equation}\label{alphone}
m\left(
\begin{array}{c}
\xi_{1} \\
\xi_{2}
\end{array}
\right)=mv-\alpha n_{\beta} \quad \text{and} \quad m\left(
\begin{array}{c}
\xi_{3} \\
\xi_{4}
\end{array}
\right)=m\ov{v}+\alpha n_{\beta} \tag{IA1}
\end{equation}
together with
\begin{equation}\label{alphtwo}
J\xi_{5}=J\omega-\alpha p^{\perp}_{\beta}\cdot n_{\beta} \quad \text{and}\quad J\xi_{6}=J\omega+\alpha(p_{\beta}-d_{\beta}e(\psi))^{\perp}\cdot n_{\beta}. \tag{IA2}
\end{equation}
Substitution of \eqref{alphone} and \eqref{alphtwo} into the algebraic identities yields the following quadratic in the unknown impulse parameter $\alpha$:
\begin{equation}\label{quadratic}
Q_{2}\alpha^{2}+Q_{1}\alpha=0,
\end{equation}
where 
\begin{equation*}
Q_{2}(V; \beta):=\Lambda_{\beta}
\end{equation*}
and
\begin{equation*}
Q_{1}(V; \beta):=-2(v+\omega p_{\beta}^{\perp}-\ov{v}-\ov{\omega}q_{\beta}^{\perp})\cdot n_{\beta}.
\end{equation*}
One solution of \eqref{quadratic} is clearly given by $\alpha(V; \beta):=0$ for all $V\in\mathbb{R}^{6}$ and $\beta\in\mathbb{T}^{3}$. The other solution is given by
\begin{equation*}
\alpha(V; \beta):=2\Lambda_{\beta}^{-1}\left(v+\omega p_{\beta}^{\perp}-\ov{v}-\ov{\omega}q_{\beta}^{\perp}\right)\cdot n_{\beta},
\end{equation*}
which yields that $s_{\beta}:\Sigma_{\beta}^{-}\rightarrow\Sigma_{\beta}^{+}$ is a {\em linear} map on $\mathbb{R}^{6}$ given by
\begin{equation}\label{wee}
s_{\beta}:=M^{-1}(I-2\widehat{\nu}_{\beta}\otimes\widehat{\nu}_{\beta})M,
\end{equation}
recalling that $\widehat{\nu}_{\beta}\in\mathbb{R}^{6}$ is the unit vector
\begin{equation*}
\widehat{\nu}_{\beta}:=\frac{1}{\sqrt{\frac{2}{m}+\frac{1}{J}|p_{\beta}^{\perp}\cdot n_{\beta}|^{2}+\frac{1}{J}|q_{\beta}^{\perp}\cdot n_{\beta}|^{2}}}M^{-1}\left[
\begin{array}{c}
-n_{\beta} \\ 
n_{\beta} \\ 
-p_{\beta}^{\perp}\cdot n_{\beta} \\ 
q_{\beta}^{\perp}\cdot n_{\beta}
\end{array}
\right].
\end{equation*}
We note that, under the impulse ans\"{a}tze \eqref{alphone} and \eqref{alphtwo}, the map $\sigma_{\beta}$ is not only a {\em linear} map on $\mathbb{R}^{6}$, but is conjugate by the mass-inertia matrix $M$ to a reflection in $\mathrm{O}(6)$. It is curious to note that no delineation between pre- and post-collisional velocity vectors in $\mathbb{R}^{6}$ was made in the derivation of the above matrix.

We claim that \eqref{wee} is not the only family of canonical scattering maps which (i) conserves linear momentum, angular momentum and kinetic energy, and (ii) has the property that $\beta\mapsto s_{\beta}V$ is a smooth map on $\mathbb{T}^{3}$ for $V\in\mathbb{R}^{6}$. We now introduce the so-called Monge-Amp\`{e}re scattering problem which allows us to substantiate this claim.
\section{The Monge-Amp\`{e}re Scattering Problem}\label{maspy}
In this section, we adopt what we believe to be a new approach to the derivation of physical canonical scattering families $\{s_{\beta}\}_{\beta\in\mathbb{T}^{3}}$. Rather than introduce ans\"{a}tze on the form of the canonical scattering map $s_{\beta}$, we focus our attention on the Jacobian PDE \eqref{negone} and \eqref{posone}, one of which members of any scattering family associated to a physical regular flow on $\mathscr{P}_{2}(\mathsf{P}_{\ast})$ necessarily satisfies. Evidently, one must make a choice as to whether the regular flow has the property $\mathrm{det}(D\sigma_{\beta})=-1$ or $\mathrm{det}(D\sigma_{\beta})=-1$ on $\mathbb{R}^{6}$, i.e. if it ought to be orientation-preserving or -reversing in velocity space. 
\subsection{Statement of the Physical Scattering Problem}
As we have seen above, the properties of physical regular flows on $\mathscr{D}_{2}(\mathsf{P}_{\ast})$ gives rise to the following scattering problem:
\subsubsection*{The Physical 2-Body Scattering Problem}
Let $\beta\in\mathbb{T}^{3}$ be given. Find a $C^{1}$ diffeomorphism $\sigma_{\beta}:\Sigma_{\beta}^{-}\rightarrow\Sigma_{\beta}^{+}$ which is a classical solution of the first-order PDE
\begin{equation}\label{jaccypde}
\mathrm{det}\,(D\sigma_{\beta}(V))=\pm 1\quad \text{on}\hspace{2mm}\mathrm{int}\,\Sigma_{\beta}^{-}
\end{equation}
subject to the following constraints:
\begin{enumerate}[(SP1)]
\item $\sigma_{\beta}^{2}(V)=V$ for all $V\in\Sigma_{\beta}^{-}$;\vspace{1.5mm}
\item $\widehat{E}_{1}\cdot \sigma_{\beta}(V)=\widehat{E}_{1}\cdot V$ and $\widehat{E}_{2}\cdot \sigma_{\beta}(V)=\widehat{E}_{2}\cdot V$; \vspace{2mm}
\item $\widehat{\Gamma}_{\beta}\cdot \sigma_{\beta}(V)=\widehat{\Gamma}_{\beta}\cdot V$; \vspace{2mm}
\item $|M\sigma_{\beta}(V)|^{2}=|MV|^{2}$.
\end{enumerate}
Evidently, the physical 2-body scattering problem asks one to characterise all self-inverse $C^{1}$ diffeomorphisms between $\Sigma_{\beta}^{-}$ and $\Sigma_{\beta}^{+}$ which respect the conservation laws of classical physics. We do not investigate the full problem as it is stated here, as it is enough for the proof of our main result (theorem \ref{mainrez}) to consider classical solutions of \eqref{jaccypde} of {\em potential} form. Indeed, this leads us naturally to the study of a class of Monge-Amp\`{e}re equations on $\mathbb{R}^{6}$.
\subsection{Statement of the Monge-Amp\`{e}re Physical Scattering Problem.}
Let us assume that for each $\beta\in\mathbb{T}^{3}$, the scattering map $\sigma_{\beta}$ satisfying \eqref{jaccypde} is of potential form, i.e. $\sigma_{\beta}:=\nabla S_{\beta}$ for some $C^{2}$ scalar function $S_{\beta}:\Sigma_{\beta}^{-}\rightarrow \mathbb{R}$. The corresponding orientation-preserving or orientation-reversing {\em Monge-Amp\`{e}re physical scattering problem} (or simply MASP$\pm$, respectively) for $S_{\beta}$ can be stated as follows:
\subsubsection*{The 2-Body MASP$\pm$}
Let $\beta\in\mathbb{T}^{3}$ be given. Find a $C^{2}$ map $S_{\beta}:\Sigma_{\beta}^{-}\rightarrow \mathbb{R}$ satisfying
\begin{equation}\label{mongey}
\mathrm{det}\,D^{2}S_{\beta}=\pm1\quad \text{on}\hspace{2mm}\mathrm{int}\,\Sigma_{\beta}^{-}
\end{equation}
subject to the conditions
\begin{enumerate}[(\textrm{MA}1)]
\item $\nabla S_{\beta}(\nabla S_{\beta}(V))=V$ for all $V\in\Sigma_{\beta}^{-}$; \vspace{1.5mm}
\item $\widehat{E}_{1}\cdot \nabla S_{\beta}(V)=\widehat{E}_{1}\cdot V$ and $\widehat{E}_{2}\cdot \nabla S_{\beta}(V)=\widehat{E}_{2}\cdot V$; \vspace{2mm}
\item $\widehat{\Gamma}_{\beta}\cdot \nabla S_{\beta}(V)=\widehat{\Gamma}_{\beta}\cdot V$; \vspace{2mm}
\item $|M\nabla S_{\beta}(V)|^{2}=|MV|^{2}$.
\end{enumerate}
We claim that there are {\em at least} 2 solutions of the MASP- problem and {\em uncountably-many} solutions of the MASP+ problem (parameterised by elements of the real projective line $\mathbb{RP}^{1}$). In turn, each of these physical scattering families gives rise to a distinct physical regular flow on $\mathscr{P}_{2}(\mathsf{P}_{\ast})$ by the work in \cite{ball}. 
\begin{rem}
Our approach herein is guided by the sequence of works of \textsc{Pogorelov} \cite{pogorelov1972improper}, \textsc{Calabi} \cite{calabi1958improper}, \textsc{J\"{o}rgens} \cite{jorgens1954losungen} and \textsc{Cheng and Yau} \cite{cheng1986complete} on entire solutions of the Monge-Am\`{e}re equation. Indeed, using the result of their work we are able to characterise all entire canonical physical scattering maps $\sigma_{\beta}$ on $\Sigma_{\beta}^{-}$. We are unable to obtain an analogous characterisation result for those scattering maps which are of class $C^{k}$, $C^{\infty}$ or real analytic on $\Sigma_{\beta}^{-}$
\end{rem}
Let us now state and prove the above claims.
\begin{thm}\label{maps1}
The MASP$-$ problem admits 2 distinct {\em quadratic} solutions of the form 
\begin{equation}\label{mapsansatz}
S_{\beta}(V)=\frac{1}{2}V\cdot (M^{-1}A_{\beta}M)V
\end{equation}
for $A_{\beta}\in\mathrm{O}(6)$, namely
\begin{equation*}
A_{\beta}^{(1)}:=I-2\widehat{\nu}_{\beta}\otimes\widehat{\nu}_{\beta}
\end{equation*}
and
\begin{equation}\label{secondscat}
A_{\beta}^{(2)}:=2\widehat{E}_{1}\otimes\widehat{E}_{1}+2\widehat{E}_{2}\otimes\widehat{E}_{2}+2\widehat{E}_{\beta}\otimes\widehat{E}_{\beta}-I,
\end{equation}
where $\widehat{E}_{1}$, $\widehat{E}_{2}$ and $\widehat{\nu}_{\beta}$ are defined by \eqref{eeee} and \eqref{eetah} above, respectively, and 
\begin{equation}\label{gsgamma}
\widehat{E}_{\beta}:= \frac{1}{\sqrt{2md_{\beta}^{2}+8J}}\left(
\begin{array}{c}
\sqrt{m}d_{\beta}\sin\psi \\
-\sqrt{m} d_{\beta}\cos\psi \\
-\sqrt{m} d_{\beta}\sin\psi \\
\sqrt{m} d_{\beta}\cos\psi \\
2\sqrt{J}\\
2\sqrt{J}
\end{array}
\right).
\end{equation}
In particular, $\mathcal{S}_{1}$ and $\mathcal{S}_{2}$ defined by
\begin{equation*}
\mathcal{S}_{j}:=\left\{s^{(j)}_{\beta}\right\}_{\beta\in\mathbb{T}^{3}} \quad \text{with}\hspace{2mm}s_{\beta}^{(1)}(V):=M^{-1}A_{\beta}^{(j)}MV
\end{equation*}
for $j=1, 2$ are two distinct families of canonical physical scattering maps on $\mathbb{R}^{6}$.
\end{thm}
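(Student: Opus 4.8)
The plan is to substitute the quadratic ansatz \eqref{mapsansatz} into the 2-Body MASP$-$ and to reduce the Monge--Amp\`{e}re equation together with the four constraints (MA1)--(MA4) to purely algebraic conditions on the matrix $A_{\beta}$. The organising observation is that the scattering map $\sigma_{\beta}=M^{-1}A_{\beta}M$ attached to the ansatz is conjugate, through the mass--inertia matrix $M$, to the matrix $A_{\beta}$ acting in the coordinates $W:=MV$ in which the kinetic energy is the Euclidean norm $|MV|^{2}=|W|^{2}$. In these coordinates energy conservation (MA4) is equivalent to $A_{\beta}\in\mathrm{O}(6)$, the involution property (MA1) to $A_{\beta}^{2}=I$, and the Monge--Amp\`{e}re equation to $\det A_{\beta}=-1$; hence $A_{\beta}$ must be a symmetric orthogonal involution of determinant $-1$, i.e. a reflection-type map with an odd-dimensional $(-1)$-eigenspace. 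The two linear constraints (MA2) and (MA3) then assert precisely that $A_{\beta}$ fixes the vectors $\widehat{E}_{1},\widehat{E}_{2}$ of \eqref{eeee} and, after transporting $\widehat{\Gamma}_{\beta}$ of \eqref{gammavector} by $M^{-1}$, the angular-momentum direction $M^{-1}\widehat{\Gamma}_{\beta}$.

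First I would record the resulting reduction: every admissible $A_{\beta}$ must fix pointwise the three-dimensional subspace $\Pi_{\beta}:=\mathrm{span}(\widehat{E}_{1},\widehat{E}_{2},M^{-1}\widehat{\Gamma}_{\beta})$ of conserved directions, be an orthogonal involution, and satisfy $\det A_{\beta}=-1$. Since $A_{\beta}$ acts as $+I$ on $\Pi_{\beta}$, its determinant equals that of its restriction to $\Pi_{\beta}^{\perp}\cong\mathbb{R}^{3}$; a symmetric involution on $\mathbb{R}^{3}$ of determinant $-1$ has eigenvalues either $(+1,+1,-1)$ or $(-1,-1,-1)$, and these two regimes furnish exactly the two exhibited solutions.

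Next I would verify the two candidates directly. For $A_{\beta}^{(1)}=I-2\widehat{\nu}_{\beta}\otimes\widehat{\nu}_{\beta}$ the central computation is to check that the unit vector $\widehat{\nu}_{\beta}$ of \eqref{eetah} lies in $\Pi_{\beta}^{\perp}$, i.e. is orthogonal to $\widehat{E}_{1}$, $\widehat{E}_{2}$ and $M^{-1}\widehat{\Gamma}_{\beta}$: the first two orthogonalities follow from the antisymmetric $[-n_{\beta},n_{\beta}]$ block of $M\widehat{\nu}_{\beta}$, and the third reduces, on using $q_{\beta}=p_{\beta}-d_{\beta}e(\psi)$, to the cancellation $d_{\beta}(e(\psi)^{\perp}\cdot n_{\beta})+(q_{\beta}^{\perp}-p_{\beta}^{\perp})\cdot n_{\beta}=0$. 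This realises $A_{\beta}^{(1)}$ as the $(+1,+1,-1)$ reflection and recovers the impulse-ansatz map \eqref{wee}. For $A_{\beta}^{(2)}$ I would exhibit it as the reflection $2P_{\Pi_{\beta}}-I$ fixing $\Pi_{\beta}$ and negating $\Pi_{\beta}^{\perp}$, which requires an explicit orthonormal basis $\{\widehat{E}_{1},\widehat{E}_{2},\widehat{E}_{\beta}\}$ of $\Pi_{\beta}$; here the work is to confirm that the vector $\widehat{E}_{\beta}$ of \eqref{gsgamma} has unit length, is orthogonal to $\widehat{E}_{1}$ and $\widehat{E}_{2}$, and that $M^{-1}\widehat{\Gamma}_{\beta}\in\mathrm{span}(\widehat{E}_{1},\widehat{E}_{2},\widehat{E}_{\beta})$, which I would establish by matching the coefficients forced separately by the $(5,6)$ and the $(3,4)$ components. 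Writing $P_{\Pi_{\beta}}=\widehat{E}_{1}\otimes\widehat{E}_{1}+\widehat{E}_{2}\otimes\widehat{E}_{2}+\widehat{E}_{\beta}\otimes\widehat{E}_{\beta}$ then yields \eqref{secondscat} and $\det A_{\beta}^{(2)}=(+1)^{3}(-1)^{3}=-1$.

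Finally, distinctness is immediate from the eigenstructure: the $(-1)$-eigenspace of $A_{\beta}^{(1)}$ is the line $\mathbb{R}\widehat{\nu}_{\beta}$, whereas that of $A_{\beta}^{(2)}$ is the three-dimensional space $\Pi_{\beta}^{\perp}$, so $A_{\beta}^{(1)}\neq A_{\beta}^{(2)}$ and the induced families $\mathcal{S}_{1},\mathcal{S}_{2}$ differ. I expect the main obstacle to be the construction and verification of the second solution --- in particular, recognising that (MA2)--(MA4) pin down only the three-dimensional fixed space $\Pi_{\beta}$ and leave genuine freedom in the choice of involution on its orthogonal complement, and then producing the explicit completion $\widehat{E}_{\beta}$ and confirming the span identity $M^{-1}\widehat{\Gamma}_{\beta}\in\Pi_{\beta}$. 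Once the two orthogonal involutions are in hand, the passage to distinct canonical physical scattering families, and thence via the existence theory of \textsc{Ballard} \cite{ball} to distinct physical regular flows, is routine.
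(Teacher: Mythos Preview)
Your approach is essentially the same as the paper's: pass to $W=MV$ coordinates, read off from (MA1), (MA4) and the Monge--Amp\`{e}re equation that $A_{\beta}\in\mathrm{O}(6)$ is a symmetric involution with $\det A_{\beta}=-1$, extract from (MA2)--(MA3) a three-dimensional $(+1)$-eigenspace, and then analyse the spectrum on its orthogonal complement.

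There is one point where you diverge from the paper and where your wording over-claims. You work with the three-dimensional complement $\Pi_{\beta}^{\perp}$ and assert that the two sign patterns $(+1,+1,-1)$ and $(-1,-1,-1)$ ``furnish exactly the two exhibited solutions''. The second pattern does, since $-I$ on $\Pi_{\beta}^{\perp}$ is unique; but the first pattern does not pick out $A_{\beta}^{(1)}$ on its own --- a reflection in $\mathbb{R}^{3}$ is determined only up to the choice of its $(-1)$-line, so at this stage you have an $\mathbb{RP}^{2}$ of candidates. The paper closes this by invoking one further constraint you omit: since $\sigma_{\beta}=M^{-1}A_{\beta}M$ must carry $\Sigma_{\beta}^{-}$ bijectively onto $\Sigma_{\beta}^{+}$, necessarily $A_{\beta}\widehat{\nu}_{\beta}=-\widehat{\nu}_{\beta}$. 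With $\widehat{\nu}_{\beta}$ forced as a fourth eigendirection, the unknown part of $A_{\beta}$ lives on the \emph{two}-dimensional space $\{\widehat{E}_{1},\widehat{E}_{2},\widehat{E}_{\beta},\widehat{\nu}_{\beta}\}^{\perp}$, and now $\lambda_{1}\lambda_{2}=1$ with $\lambda_{i}^{2}=1$ gives $\lambda_{1}=\lambda_{2}=\pm 1$; since $\widehat{F}_{1}\otimes\widehat{F}_{1}+\widehat{F}_{2}\otimes\widehat{F}_{2}$ is the orthogonal projection onto that plane regardless of basis, one genuinely obtains exactly two matrices.

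For the theorem as stated (existence of two distinct quadratic solutions), your direct-verification route still succeeds: you check $\widehat{\nu}_{\beta}\in\Pi_{\beta}^{\perp}$, which in particular shows that both $A_{\beta}^{(1)}$ and $A_{\beta}^{(2)}$ send $\widehat{\nu}_{\beta}$ to $-\widehat{\nu}_{\beta}$ and hence respect the half-space bijection. Just be aware that your dichotomy on $\Pi_{\beta}^{\perp}$ is not a classification without that extra constraint, and rephrase accordingly.
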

\begin{proof}
It proves helpful to transform the Monge-Amp\`{e}re scattering problem in order to reveal the r\^{o}le of orthogonal matrices in linear scattering. Indeed, suppose that $S_{\beta}(V)$ is of the form \eqref{mapsansatz} above. It follows that the matrix $A_{\beta}\in\mathbb{R}^{6\times 6}$ satisfies
\begin{equation}\label{detty}
\mathrm{det}\,A_{\beta}=-1.
\end{equation}
Moreover, we find from (MA2) that
\begin{equation*}
\widehat{E}_{1}\cdot (M^{-1}A_{\beta}M)V=\widehat{E}_{1}\cdot V \quad \Longleftrightarrow\quad (A_{\beta}^{T}M^{-1}\widehat{E}_{1}-M^{-1}\widehat{E}_{1})\cdot MV=0
\end{equation*}
for all $V\in\mathbb{R}^{6}$, from which it follows by definition of $M$ that $\widehat{E}_{1}$ is a unit eigenvector of $A_{\beta}^{T}$ with corresponding eigenvalue 1. Similarly, from (MA2), (MA3) and the Gram-Schmidt algorithm one has that both $\widehat{E}_{2}$ and $\widehat{E}_{\beta}$ (defined in \eqref{gsgamma} above) are unit eigenvectors of $A_{\beta}^{T}$ each with associated eigenvalue 1. Moreover, $\{\widehat{E}_{1}, \widehat{E}_{2}, \widehat{E}_{\beta}\}$ constitutes a set of mutually-orthogonal vectors in $\mathbb{R}^{6}$.

By definition of scattering map, as $M^{-1}A_{\beta}M$ must be a bijection between $\Sigma_{\beta}^{-}$ and $\Sigma_{\beta}^{+}$, it follows that $A_{\beta}$ must be a bijection between $\widehat{\Sigma}_{\beta}^{-}$ and $\widehat{\Sigma}_{\beta}^{+}$, where
\begin{equation*}
\widehat{\Sigma}_{\beta}^{-}:=\left\{W\in\mathbb{R}^{6}\,:\,W\cdot\widehat{\nu}_{\beta}\leq 0\right\}\quad \text{and}\quad\widehat{\Sigma}_{\beta}^{-}:=\left\{W\in\mathbb{R}^{6}\,:\,W\cdot\widehat{\nu}_{\beta}\geq 0\right\}.
\end{equation*}
This is only possible if $A_{\beta}\widehat{\nu}_{\beta}=-\widehat{\nu}_{\beta}$, which yields that $\widehat{\nu}_{\beta}$ is a unit eigenvector of $A_{\beta}^{T}$ with associated eigenvalue $-1$. One can check that $\widehat{\nu}_{\beta}$ is orthogonal to $\mathrm{span}\,\{\widehat{E}_{1}, \widehat{E}_{2}, \widehat{E}_{\beta}\}$.

Next, (MA4) holds if and only if $|A_{\beta}W|^{2}=|W|^{2}$ for all $W\in\mathbb{R}^{6}$, whence $A_{\beta}\in\mathrm{O}(6)$. Moreover, (MA1) reduces to the identity $A_{\beta}^{2}=I$ in $\mathrm{O}(6)$, and so $A_{\beta}$ is self-adjoint. By the Spectral Decomposition Theorem (see \textsc{Bollob\'{a}s} \cite{bollobas1990linear}, p. 200), it follows that $A_{\beta}$ admits the representation formula
\begin{equation}\label{reppy}
A_{\beta}=\widehat{E}_{1}\otimes \widehat{E}_{1}+\widehat{E}_{2}\otimes \widehat{E}_{2}+ \widehat{E}_{\beta}\otimes\widehat{E}_{\beta}-\widehat{\nu}_{\beta}\otimes\widehat{\nu}_{\beta}+\lambda_{1}\widehat{F}_{1}\otimes \widehat{F}_{1}+\lambda_{2}\widehat{F}_{2}\otimes \widehat{F}_{2},
\end{equation}
for some mutually-orthogonal unit vectors $\widehat{F}_{1}=\widehat{F}_{1}(\beta)$ and $\widehat{F}_{2}=\widehat{F}_{2}(\beta)$ with the property that
\begin{equation}\label{basis}
\left\{\widehat{E}_{1}, \widehat{E}_{2}, \widehat{E}_{\beta}, \widehat{\nu}_{\beta}, \widehat{F}_{1}, \widehat{F}_{2}\right\}\subset\mathbb{R}^{6}
\end{equation}
is a basis for $\mathbb{R}^{6}$; moreover, $\lambda_{1}, \lambda_{2}\in\mathbb{R}$ are eigenvalues associated to $\widehat{F}_{1}$ and $\widehat{F}_{2}$, respectively. Noting that  $\widehat{E}_{1}, \widehat{E}_{2}, \widehat{E}_{\beta}$ and $\widehat{E}_{\beta}$ are mutually orthogonal, it follows that $\widehat{F}_{1}$ and $\widehat{F}_{2}$ both lie in the subspace $\mathrm{span}\{\widehat{E}_{1}, \widehat{E}_{2}, \widehat{E}_{\beta}, \widehat{\nu}_{\beta}\}^{\perp}$.

By the representation formula \eqref{reppy}, it follows from \eqref{detty} that
\begin{equation*}
\lambda_{1}\lambda_{2}=1.
\end{equation*}
As $A_{\beta}$ is idempotent on $\mathbb{R}^{6}$, it follows that $\lambda_{1}^{2}=\lambda_{2}^{2}=1$, and so it is either the case that $\lambda_{1}=\lambda_{2}=-1$ or $\lambda_{1}=\lambda_{2}=1$. Finally, using the fact that
\begin{align}\label{woop}
\widehat{F}_{1}\otimes\widehat{F}_{1}+\widehat{F}_{2}\otimes\widehat{F}_{2} \vspace{2mm}\notag \\
= I-\widehat{E}_{1}\otimes\widehat{E}_{1}-\widehat{E}_{2}\otimes\widehat{E}_{2}-\widehat{E}_{\beta}\otimes\widehat{E}_{\beta}-\widehat{\nu}_{\beta}\otimes\widehat{\nu}_{\beta}
\end{align}
in $\mathbb{R}^{6\times 6}$, the result follows.
\end{proof}
\begin{rem}
We note that the matrix \eqref{secondscat} corresponds to the following expressions for post-collisional velocities, given $V\in\Sigma_{\beta}^{-}$:
\begin{align}
v_{\beta}':=\ov{v}+\frac{d_{\beta}}{md_{\beta}^{2}+4J}\left(md_{\beta}v-2J\omega e(\psi)^{\perp}-md_{\beta}\ov{v}-2J\ov{\omega}e(\psi)^{\perp}\right)\cdot e(\psi)^{\perp}e(\psi)^{\perp}, \notag \vspace{2mm}\\
\ov{v}_{\beta}':=v-\frac{d_{\beta}}{md_{\beta}^{2}+4J}\left(md_{\beta}v-2J\omega e(\psi)^{\perp}-md_{\beta}\ov{v}-2J\ov{\omega}e(\psi)^{\perp}\right)\cdot e(\psi)^{\perp}e(\psi)^{\perp}, \notag \vspace{2mm} \\
\omega_{\beta}':=-\omega-\frac{2}{md_{\beta}^{2}+4J}\left(md_{\beta}v-2J\omega e(\psi)^{\perp}-md_{\beta}\ov{v}-2J\ov{\omega}e(\psi)^{\perp}\right)\cdot e(\psi)^{\perp},\notag \vspace{2mm}\\
\ov{\omega}_{\beta}':=-\ov{\omega}-\frac{2}{md_{\beta}^{2}+4J}\left(md_{\beta}v-2J\omega e(\psi)^{\perp}-md_{\beta}\ov{v}-2J\ov{\omega}e(\psi)^{\perp}\right)\cdot e(\psi)^{\perp}.
\end{align}
It is clear from this formulation that impulse is proportional to the vector $e(\psi)^{\perp}$ and {\em not} the normal to the point of collision $n_{\beta}$. One might then legitimately ask which of these two collision boundary conditions is `physically' appropriate.
\end{rem}
We now consider the `wilder' case of orientation-preserving scattering.
\begin{thm}\label{maps2}
The MASP+ problem admits uncountably-many quadratic solutions of the form 
\begin{equation}\label{mapsansatz2}
S_{\beta}(V)=\frac{1}{2}V\cdot (M^{-1}A_{\beta}M)V
\end{equation}
for $A_{\beta}\in\mathrm{O}(6)$ of the form
\begin{equation*}
A_{\beta}:=I-2\widehat{\nu}_{\beta}\otimes\widehat{\nu}_{\beta}-2\widehat{F}_{\beta}\otimes\widehat{F}_{\beta},
\end{equation*}
where $\widehat{F}_{\beta}$ is any unit vector in $\mathrm{span}\{\widehat{E}_{1}, \widehat{E}_{2}, \widehat{E}_{\beta}, \widehat{\nu}_{\beta}\}$. As such, if one has that $\beta\mapsto\widehat{F}_{\beta}$ lies in $C^{0}(\mathbb{T}^{3}, \mathbb{R}^{6})$, then $\{\nabla S_{\beta}\}_{\beta\in\mathbb{T}^{3}}$ with $S_{\beta}$ defined by \eqref{mapsansatz2} is a canonical physical scattering family.
\end{thm}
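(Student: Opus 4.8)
The plan is to run the same reduction as in the proof of Theorem \ref{maps1}, diverging only at the determinant condition. First I would substitute the quadratic ansatz \eqref{mapsansatz2} into the constraints and observe that the deductions there do not use the sign in \eqref{mongey}: conditions (MA2)--(MA3) together with the Gram--Schmidt argument force $\widehat{E}_1,\widehat{E}_2,\widehat{E}_\beta$ to be mutually orthogonal unit eigenvectors of $A_\beta^T$ with eigenvalue $1$; bijectivity of $M^{-1}A_\beta M$ between $\Sigma_\beta^-$ and $\Sigma_\beta^+$ forces $A_\beta\widehat{\nu}_\beta=-\widehat{\nu}_\beta$ with $\widehat{\nu}_\beta$ orthogonal to $\mathrm{span}\{\widehat{E}_1,\widehat{E}_2,\widehat{E}_\beta\}$; (MA4) gives $A_\beta\in\mathrm{O}(6)$; and (MA1) gives $A_\beta^2=I$, so $A_\beta$ is self-adjoint. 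Consequently the spectral representation \eqref{reppy} carries over verbatim, with two undetermined eigenvalues $\lambda_1,\lambda_2$ attached to an orthonormal pair $\widehat{F}_1,\widehat{F}_2$ spanning the two-plane $W:=\mathrm{span}\{\widehat{E}_1,\widehat{E}_2,\widehat{E}_\beta,\widehat{\nu}_\beta\}^\perp$.

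The single point of departure is the Monge--Amp\`ere equation: for the MASP$+$ problem one has $\det D^2 S_\beta=+1$, hence $\det A_\beta=+1$. Taking the product of the six eigenvalues $1,1,1,-1,\lambda_1,\lambda_2$ then yields $\lambda_1\lambda_2=-1$, in contrast to the relation $\lambda_1\lambda_2=+1$ obtained for MASP$-$. Combined with idempotency $\lambda_i^2=1$ this forces $\{\lambda_1,\lambda_2\}=\{+1,-1\}$. I expect this to be the conceptual crux and the genuine source of non-uniqueness: whereas in the orientation-reversing case the two free eigenvalues were \emph{equal}, so that by the completeness relation \eqref{woop} the contribution $\lambda(\widehat{F}_1\otimes\widehat{F}_1+\widehat{F}_2\otimes\widehat{F}_2)$ was independent of the chosen frame in $W$ and collapsed to exactly two matrices, here the eigenvalues are \emph{opposite}, and the reflection $\widehat{F}_1\otimes\widehat{F}_1-\widehat{F}_2\otimes\widehat{F}_2$ on $W$ depends genuinely on the splitting of $W$ into its $(+1)$- and $(-1)$-eigenlines.

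Writing $\widehat{F}_\beta$ for the unit $(-1)$-eigenvector and using $\widehat{F}_1\otimes\widehat{F}_1=\mathrm{Id}_W-\widehat{F}_\beta\otimes\widehat{F}_\beta$ together with the identity $\mathrm{Id}_W=I-\widehat{E}_1\otimes\widehat{E}_1-\widehat{E}_2\otimes\widehat{E}_2-\widehat{E}_\beta\otimes\widehat{E}_\beta-\widehat{\nu}_\beta\otimes\widehat{\nu}_\beta$ coming from \eqref{woop}, I would collapse the spectral sum to $A_\beta=I-2\widehat{\nu}_\beta\otimes\widehat{\nu}_\beta-2\widehat{F}_\beta\otimes\widehat{F}_\beta$, which is exactly the claimed form, with $\widehat{F}_\beta$ ranging over the unit sphere of the two-plane $W$. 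Conversely, a short direct check confirms that for \emph{every} such $\widehat{F}_\beta$ this $A_\beta$ is self-adjoint, orthogonal, involutive and of unit determinant, and fixes $\widehat{E}_1,\widehat{E}_2,\widehat{E}_\beta$ while negating $\widehat{\nu}_\beta$; hence (MA1)--(MA4) and \eqref{mongey} with the $+$ sign all hold. Since $\widehat{F}_\beta\otimes\widehat{F}_\beta=(-\widehat{F}_\beta)\otimes(-\widehat{F}_\beta)$ depends only on the line $\mathbb{R}\widehat{F}_\beta\subset W$, the solutions are parameterised precisely by the lines in $W$, i.e.\ by $\mathbb{RP}^1$.

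Finally I would establish distinctness and the family statement. Distinct lines in $W$ give distinct rank-one projections $\widehat{F}_\beta\otimes\widehat{F}_\beta$, hence distinct matrices $A_\beta$; and since $A_\beta\mapsto M^{-1}A_\beta M$ is injective, the associated scattering maps $\nabla S_\beta$ are pairwise distinct, yielding the uncountable ($\mathbb{RP}^1$-indexed) family. For the concluding assertion, if $\beta\mapsto\widehat{F}_\beta$ is continuous on $\mathbb{T}^3$ then $\beta\mapsto A_\beta$ is continuous, so $\{\nabla S_\beta\}_{\beta\in\mathbb{T}^3}$ has the $\beta$-regularity demanded of a canonical scattering family and conserves linear and angular momentum and kinetic energy by construction; appealing to Ballard's existence theory \cite{ball} then promotes each choice to a physical regular flow. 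The main obstacle is thus not computational but structural: one must recognise that the opposite-sign eigenvalue pair, forced by the orientation-preserving determinant, destroys precisely the frame-invariance that had pinned the orientation-reversing problem down to two solutions, thereby opening up the one-parameter $\mathbb{RP}^1$ family.
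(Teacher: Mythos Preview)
Your proposal is correct and follows essentially the same approach as the paper: reuse the spectral reduction from Theorem~\ref{maps1}, observe that the $+$ sign in \eqref{mongey} forces $\lambda_{1}\lambda_{2}=-1$ (hence $\{\lambda_{1},\lambda_{2}\}=\{+1,-1\}$ by idempotency), and then use the completeness relation to collapse the representation to $A_{\beta}=I-2\widehat{\nu}_{\beta}\otimes\widehat{\nu}_{\beta}-2\widehat{F}_{\beta}\otimes\widehat{F}_{\beta}$ with $\widehat{F}_{\beta}$ ranging over the unit sphere of the two-plane $\mathrm{span}\{\widehat{E}_{1},\widehat{E}_{2},\widehat{E}_{\beta},\widehat{\nu}_{\beta}\}^{\perp}$. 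Your write-up in fact supplies more detail than the paper's own proof (the converse verification, the explicit $\mathbb{RP}^{1}$ identification via $\widehat{F}_{\beta}\sim-\widehat{F}_{\beta}$, and the distinctness argument), and you have correctly read the intended orthogonal complement despite the missing ${}^{\perp}$ in the theorem statement.
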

\begin{proof}
This follows the same lines as the proof of theorem \ref{maps1}, with the difference that $\lambda_{1}\lambda_{2}=-1$. For the purpose of utilising the Spectral Decomposition Theorem, one can pick any two orthogonal vectors which span $\{\widehat{E}_{1}, \widehat{E}_{2}, \widehat{E}_{\beta}, \widehat{\nu}_{\beta}\}^{\perp}$; equivalently, one can choose any unit vector $\widehat{F}_{\beta}$ in $\{\widehat{E}_{1}, \widehat{E}_{2}, \widehat{E}_{\beta}, \widehat{\nu}_{\beta}\}^{\perp}$ and generate the last remaining unit eigendirection $\widehat{G}_{\beta}$ of $A_{\beta}$ by means of the Gram-Schmidt process. However, using the fact that
\begin{equation*}
\widehat{G}_{\beta}\otimes \widehat{G}_{\beta}=I-\widehat{E}_{1}\otimes\widehat{E}_{1}-\widehat{E}_{2}\otimes\widehat{E}_{2}-\widehat{E}_{\beta}\otimes\widehat{E}_{\beta}-\widehat{\nu}_{\beta}\otimes\widehat{\nu}_{\beta}-\widehat{F}_{\beta}\otimes \widehat{F}_{\beta}
\end{equation*}
owing to the difference in the signs of the eigenvalues, one now has that \eqref{reppy} becomes
\begin{equation*}
A_{\beta}=I-2\widehat{\nu}\otimes\widehat{\nu}-2\widehat{F}_{\beta}\otimes\widehat{F}_{\beta},
\end{equation*}
and so we are done.
\end{proof}
This result shows that one can associate to every continuous map $\phi:\mathbb{T}^{2}\rightarrow\mathbb{RP}^{1}$ a family of canonical physical scattering maps, each of which is orientation-preserving on $\mathbb{R}^{6}$. Using Theorems \ref{maps1} and \ref{maps2} above, it follows from the general existence theory of \textsc{Ballard} \cite{ball} that there exist uncountably-many physical regular flows on $\mathscr{D}_{2}(\mathsf{P}_{\ast})$. Indeed, for the convenience of the reader, we recast the relevant result of \cite{ball} in the terminology of this work.
\begin{thm}\label{preciseball}
Suppose that $\mathsf{P}_{\ast}\subset\mathbb{R}^{2}$ is a compact, strictly-convex set with real-analytic boundary curve $\partial\mathsf{P}_{\ast}$. Let $\mathcal{S}:=\{\sigma_{\beta}\}_{\beta\in\mathbb{T}^{3}}$ be an associated family of physical canonical scattering maps. For each $Z_{0}\in\mathscr{D}_{2}(\mathsf{P}_{\ast})$ with $X_{0}\in\mathrm{int}\,\mathscr{P}_{2}(\mathsf{P}_{\ast})$, there exists a unique global-in-time physical weak solution $t\mapsto Z(t)$ of \eqref{newty}. This solution depends explicitly on the choice of $\mathcal{S}$. Moreover, $t\mapsto Z(t)$ is either (i) real analytic on $\mathbb{R}$, or (ii) real analytic on $\mathbb{R}$ outside the countable set of isolated collision times $\mathcal{T}(Z_{0})$.
\end{thm}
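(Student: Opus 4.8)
The statement is, in essence, a transcription of \textsc{Ballard}'s analytic well-posedness theorem \cite{ball} into the vocabulary developed above, so the plan is to verify that our two-body system meets the hypotheses of that theorem and then to translate its conclusion into the language of Definition \ref{physweak}. First I would recast the dynamics \eqref{newty} as a Lagrangian system on the configuration manifold-with-boundary $\mathscr{P}_{2}(\mathsf{P}_{\ast})$, equipped with the flat, constant kinetic metric $\langle V, W\rangle:=MV\cdot MW$ induced by the mass-inertia matrix $M$, subject to the single perfect unilateral constraint of non-interpenetration encoded by $F(X):=|x-\ov{x}|-d_{\beta(X)}\geq 0$. Since \eqref{newty} is force-free, the free motion between collisions is geodesic, i.e. the affine motion $t\mapsto X_{0}+tV_{0}$ at constant velocity, which is trivially real-analytic in $t$.

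Next I would check the analyticity hypotheses that power Ballard's theorem. The metric is constant, hence analytic. The admissible region has real-analytic boundary: this is exactly the content of the atlas $\Phi=\{(\phi_{X}, N_{\beta(X)})\}$ built in Section \ref{para}, whose charts are real-analytic precisely because $\partial\mathsf{P}_{\ast}$ is real-analytic and hence the distance-of-closest-approach map $D$ is. The impact law is the prescribed family $\mathcal{S}=\{\sigma_{\beta}\}$: at a collision configuration $X$ with parameter $\beta=\beta(X)$, an incoming velocity $V\in\mathcal{V}_{\beta}^{-}$ is sent to $\sigma_{\beta}(V)\in\mathcal{V}_{\beta}^{+}$. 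By Theorems \ref{maps1} and \ref{maps2} each $\sigma_{\beta}=M^{-1}A_{\beta}M$ with $A_{\beta}\in\mathrm{O}(6)$ depending continuously on $\beta$ (real-analytically in the families of Theorem \ref{maps1}); it conserves kinetic energy by \eqref{scatcoke}, bijects $\mathcal{V}_{\beta}^{-}$ onto $\mathcal{V}_{\beta}^{+}$, and fixes the momentum directions by conditions (MA2)--(MA3). I would stress that only the analyticity of the constraint boundary and the energy-conservation of the impact law enter Ballard's non-accumulation argument; the impact law itself need only depend continuously on the configuration for the resulting flow to be well-defined, which is what accommodates the merely continuous line fields $\phi\in C^{0}(\mathbb{T}^{2}, \mathbb{RP}^{1})$ of the orientation-preserving case.

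With these inputs in place I would invoke \cite{ball} to produce, for each non-colliding datum $Z_{0}=[X_{0}, V_{0}]$ with $X_{0}\in\mathrm{int}\,\mathscr{P}_{2}(\mathsf{P}_{\ast})$, a unique maximal solution of the associated measure differential inclusion. Concretely this solution is built by the obvious iteration: follow the straight line $X_{0}+tV_{0}$ until the first time at which $F$ vanishes, apply $\sigma_{\beta}$ there to update the velocity, and repeat. Because every $\sigma_{\beta}$ respects \eqref{scatcolm}, \eqref{scatcoam} and \eqref{scatcoke}, the trajectory so produced conserves linear momentum, angular momentum and kinetic energy, and its velocity field lies in $\mathrm{BV}_{\mathrm{loc}}$ with the impulsive Radon measure $\mu$ supported on the collision set; it is therefore a global-in-time physical weak solution in the sense of Definition \ref{physweak}, and its explicit dependence on $\mathcal{S}$ is visible through the update rule --- which is exactly the feature to be exploited for non-uniqueness.

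The genuinely hard step --- and the reason Ballard's deep theory, rather than an elementary ODE argument, is required --- is to show that the collision times $\mathcal{T}(Z_{0})$ form a locally-finite set, thereby excluding Zeno-type accumulation and extending the solution to all of $\mathbb{R}$. This is where real-analyticity is indispensable: on each collision-free interval the gap function $t\mapsto F(X(t))$ is real-analytic, so its zeros are isolated unless it vanishes identically on a subinterval, and the latter (a rolling motion) is ruled out by the support hypothesis \eqref{suppy}. Ballard's {\L}ojasiewicz-type analytic-continuation argument (\cite{ball}, and in particular Proposition 19, that energy-conserving solutions admit only finitely many collisions on compact time intervals) upgrades this pointwise isolation to the global local-finiteness recorded in alternative (ii), while alternative (i) is simply the collision-free case $\mathcal{T}(Z_{0})=\varnothing$. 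I would import this finiteness verbatim from \cite{ball} rather than reprove it, and thereby conclude.
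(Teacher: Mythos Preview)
Your proposal is correct and takes essentially the same approach as the paper: the paper's proof is literally the single line ``See \textsc{Ballard} \cite{ball}, theorem 9 and corollary 9'', and your write-up is a faithful unpacking of what that citation entails --- recasting \eqref{newty} as an analytic unilateral system with kinetic metric $M^{2}$ and constraint $F\geq 0$, verifying Ballard's analyticity and energy-conservation hypotheses, and importing his non-accumulation result (his Proposition 19) to obtain local finiteness of $\mathcal{T}(Z_{0})$. There is no substantive divergence; you have simply filled in the verification steps that the paper leaves implicit.
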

\begin{proof}
See \textsc{Ballard} \cite{ball}, theorem 9 and corollary 9.
\end{proof}
Finally, the proof of our main claim follows quickly. \vspace{2mm}

{\em Proof of Theorem \ref{mainrez}}. The set of all $Z_{0}$ in $\mathrm{int}\,\mathscr{P}_{2}(\mathsf{P}_{\ast})\times \mathbb{R}^{6}$ such that 
\begin{equation*}
\left\{
X_{0}+tV_{0}\,:\,t\in\mathbb{R}
\right\}
\end{equation*}
has non-empty intersection with $\partial\mathscr{P}_{2}(\mathsf{P}_{\ast})$ is of non-null $(\mathscr{L}_{6}\mres\mathscr{P}_{2}(\mathsf{P}_{2}))\otimes \mathbb{R}^{6}$-measure. For any $Z_{0}$ in this set, we apply Theorem \ref{preciseball} above using any choice of canonical scattering family one wishes.
\section{Discussion: Relevance of Result for the Boltzmann Equation}\label{discuss}
We claim that our non-uniqueness result has interesting consequences for the Boltzmann equation, and for kinetic theory in general. As we shall discuss below, it prompts -- among others -- the following natural questions:
\begin{enumerate}[(Q1)]
\item Do all those collision operators built using the distinct families of canonical physical scattering maps derived in this work admit the same `macroscopic properties'? For instance, do they all admit the same set of collision invariants?
\item Is there a `macroscopic' difference between linear scattering maps and nonlinear scattering maps at the level of particle dynamics?
\end{enumerate}
In any case, in this final section, we shall fix our attention on the Boltzmann equation which governs a probability distribution function $f=f(x, \vartheta, v, \omega, t)$ that models the `average' behaviour of a gas comprised of congruent strictly convex hard particles which do not admit the symmetry group of the disk. If the IBVP for \eqref{nemo} modelling the underlying $N$-particle system is furnished with boundary conditions\footnote{To be more precise, the component of the boundary $\partial\mathcal{P}_{N}(\mathsf{P}_{\ast})$ corresponding to {\em 2-body collisions} is furnished with the boundary conditions effected by $\{\sigma_{\beta}\}_{\beta\in\mathbb{T}^{3}}$.} effected by a canonical physical scattering family $\{\sigma_{\beta}\}_{\beta\in\mathbb{T}^{3}}$, it can be shown by appealing to the formal derivation via the BBGKY hierarchy (see \textsc{Gallagher, Saint-Raymond and Texier} \cite{gallagher2013newton} in the case of hard disks, for instance) the equation reads as
\begin{equation}\label{bolty}
\frac{\partial f}{\partial t}+(v\cdot\nabla_{x})f+\omega\frac{\partial f}{\partial\vartheta}=\mathcal{C}[f, f],
\end{equation}
where $\mathcal{C}$ is the collision operator given by
\begin{equation}\label{collster}
\mathcal{C}[f, f]:=\int_{\mathbb{R}^{2}\times\mathbb{R}\times\mathbb{S}^{1}}\int_{\mathbb{S}^{1}}|V\cdot \widehat{\nu}_{\beta}|\left(f_{\beta}'\ov{f}_{\beta}'-f\ov{f}\right)\, dS(n)dS(\ov{\vartheta})d\ov{\omega}d\ov{v},
\end{equation}
with
\begin{equation*}
\begin{array}{c}
f_{\beta}'=f(x, \vartheta, v_{\beta}', \omega_{\beta}', t), \qquad \ov{f}_{\beta}'=f(x, \ov{\vartheta}, \ov{v}_{\beta}', \ov{\omega}_{\beta}', t), \\
f=f(x, \vartheta, v, \omega, t), \qquad \ov{f}=f(x, \ov{\vartheta}, \ov{v}, \ov{\omega}, t),
\end{array}
\end{equation*}
and
\begin{equation*}
\left[
\begin{array}{c}
v_{\beta}' \\
\ov{v}_{\beta}'\\
\omega_{\beta}' \\
\ov{\omega}_{\beta}'
\end{array}
\right]:=\sigma_{\beta}[V]
\end{equation*}
for $x\in\mathbb{R}^{2}$, $t\geq 0$, $V=[v, \ov{v}, \omega, \ov{\omega}]\in\mathbb{R}^{6}$. 

In order to fix ideas, we proceed at a formal level, i.e. we do not worry about establishing any notion of solution to the above kinetic equation. Suppose one is interested in characterising all spatially-homogeneous equilibrium solutions $M$ of \eqref{bolty}. In other words, one is interested in finding all {\em Maxwellian distributions} satisfying the equation
\begin{equation}\label{maxy}
\mathcal{C}[M, M]=0.
\end{equation}
To do this, one multiplies throughout equation \eqref{maxy} by $\log{M}$, integrates over phase space $\mathbb{R}^{2}\times\mathbb{R}\times\mathbb{S}^{1}$, and uses the fact that the change of variables $V\mapsto \sigma_{\beta}[V]$ has unit Jacobian on $\mathbb{R}^{6}$ to produce the identity
\begin{equation}\label{dude}
\int_{\mathbb{R}^{6}\times\mathbb{T}^{3}}|V\cdot\widehat{\nu}_{\beta}|\left(M_{\beta}'\ov{M}_{\beta}-M\ov{M}\right)\log\left(\frac{M_{\beta}'\ov{M}_{\beta}'}{M\ov{M}}\right)\,dVd\beta=0.
\end{equation}
By elementary properties of the natural logarithm, it follows that identity \eqref{dude} holds if and only if
\begin{equation}\label{papapa}
\begin{array}{c}
M_{\beta}'\ov{M}_{\beta}'=M\ov{M} \vspace{1mm}\\
\Longleftrightarrow\quad 
\log{M_{\beta}'}+\log{\ov{M}_{\beta}'}=\log{M}+\log{\ov{M}} \vspace{1mm}\\
\Longleftrightarrow \quad \phi(v_{\beta}', \omega_{\beta}', \vartheta)+\phi(\ov{v}_{\beta}', \ov{v}_{\beta}', \ov{\vartheta})=\phi(v, \omega', \vartheta)+\phi(\ov{v}', \ov{v}', \ov{\vartheta}),
\end{array}
\end{equation}
for all $(v, \omega, \vartheta)\in\mathbb{R}^{2}\times\mathbb{R}\times\mathbb{S}^{1}$, where $\phi(v, \omega, \vartheta):=\log{M(v, \omega, \vartheta)}$. As such, $M$ is a Maxwellian if and only if $\log{M}$ is a {\bf collision invariant} (see \textsc{Cercignani, Illner and Pulvirenti} \cite{cercignani2013mathematical}, Chapter 3, for more on such topics in the case of spherical particles).

As mentioned in the introduction of this article, this problem has already been studied in \cite{lsrmw} in the case of the canonical physical scattering family containing the maps
\begin{equation}\label{berp}
\sigma_{\beta}:=M^{-1}\left(I-2\widehat{\nu}_{\beta}\otimes \widehat{\nu}_{\beta}\right)M, \quad \text{for}\hspace{2mm}\beta\in\mathbb{T}^{3}.
\end{equation}
In the process of establishing this result, one rewrites identity \eqref{papapa} as
\begin{equation*}
\Phi(\sigma_{\beta}[V]; \vartheta, \ov{\vartheta})=\Phi(V; \vartheta, \ov{\vartheta}) \quad \text{for all}\hspace{2mm}V\in\mathbb{R}^{6}, \vspace{2mm}\beta\in\mathbb{T}^{3},
\end{equation*}
where $\Phi:\mathbb{R}^{6}\times\mathbb{T}^{2}\rightarrow\mathbb{R}$ is defined pointwise as $\Phi(V; \vartheta, \ov{\vartheta}):=\phi(v, \omega, \vartheta)+\phi(\ov{v}, \ov{\omega}, \ov{\vartheta})$. As such, for a fixed spatial parameter $(\vartheta, \ov{\vartheta})\in\mathbb{T}^{2}$, one looks to characterise all scalar invariants of the group $G_{(\vartheta, \ov{\vartheta})}\subseteq\mathrm{O}(6)$ generated algebraically by the set
\begin{equation*}
\left\{
\sigma_{\beta}\,:\,\psi\in\mathbb{S}^{1}
\right\}.
\end{equation*}
In \cite{lsrmw}, the following result was proved.
\begin{thm}\label{olestuff}
Suppose $\mathsf{P}_{\ast}\subset\mathbb{R}^{2}$ is a compact, strictly-convex set with real-analytic boundary curve $\partial\mathsf{P}_{\ast}$. Moreover, suppose $\mathsf{P}_{\ast}$ admits the following symmetries:
\begin{equation*}
(I-2e\otimes e)\mathsf{P}_{\ast}=\mathsf{P}_{\ast} \quad \text{and}\quad (I-2e^{\perp}\otimes e^{\perp})\mathsf{P}_{\ast}=\mathsf{P}_{\ast} 
\end{equation*}
for some unit vector $e\in\mathbb{R}^{2}$. Suppose a measurable map $\phi:\mathbb{R}^{2}\times\mathbb{R}\times\mathbb{S}^{1}\rightarrow\mathbb{R}$ satisfies the identity
\begin{equation}
\phi(v_{\beta}', \omega_{\beta}', \vartheta)+\phi(\ov{v}_{\beta}', \ov{v}_{\beta}', \ov{\vartheta})=\phi(v, \omega', \vartheta)+\phi(\ov{v}', \ov{v}', \ov{\vartheta})
\end{equation} 
for all $V\in\mathbb{R}^{6}$ and $\beta=(\vartheta, \ov{\vartheta}, \psi)\in\mathbb{T}^{3}$, where the post-collisional variable $V_{\beta}':=[v_{\beta}', \ov{v}_{\beta}', \omega_{\beta}', \ov{\omega}_{\beta}']$ is determined by the scattering matrix \eqref{berp}. Then $\phi$ is necessarily of the form
\begin{equation}
\phi(v, \omega, \vartheta)=a(\vartheta)+b\cdot v+c(m|v|^{2}+J\omega^{2})
\end{equation}
for some measurable function $a:\mathbb{S}^{1}\rightarrow\mathbb{R}$, $b\in\mathbb{R}^{2}$ and $c\in\mathbb{R}$.
\end{thm}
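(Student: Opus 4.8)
The necessity is the substantive direction; that the displayed functionals are collision invariants follows at once from the conservation laws \eqref{colm}, \eqref{coam} and \eqref{coke}, since $\vartheta, \ov{\vartheta}$ are unaltered by a collision while $m(v+\ov{v})$ and $m|v|^{2}+J\omega^{2}+m|\ov{v}|^{2}+J\ov{\omega}^{2}$ are conserved. The plan is therefore to show that no further invariants exist. Writing $\Phi(V; \vartheta, \ov{\vartheta}):=\phi(v, \omega, \vartheta)+\phi(\ov{v}, \ov{\omega}, \ov{\vartheta})$, the hypothesis reads $\Phi(\sigma_{\beta}[V]; \vartheta, \ov{\vartheta})=\Phi(V; \vartheta, \ov{\vartheta})$ for every $V\in\mathbb{R}^{6}$ and $\beta=(\vartheta, \ov{\vartheta}, \psi)\in\mathbb{T}^{3}$; that is, with $(\vartheta, \ov{\vartheta})$ held fixed, $\Phi(\cdot; \vartheta, \ov{\vartheta})$ is invariant under the group $G_{(\vartheta, \ov{\vartheta})}\subseteq\mathrm{O}(6)$ generated by the one-parameter family of scattering reflections $\{\sigma_{\beta}:\psi\in\mathbb{S}^{1}\}$.

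First I would pass to the coordinates $W:=MV$, in which each $\sigma_{\beta}$ becomes the genuine reflection $R_{\beta}:=I-2\widehat{\nu}_{\beta}\otimes\widehat{\nu}_{\beta}$. Since $\widehat{\nu}_{\beta}\cdot\widehat{E}_{1}=\widehat{\nu}_{\beta}\cdot\widehat{E}_{2}=0$ for all $\beta$ (a direct consequence of conservation of linear momentum), every $R_{\beta}$ fixes $\widehat{E}_{1}$ and $\widehat{E}_{2}$ and acts as a reflection on the four-dimensional subspace $U:=\{\widehat{E}_{1}, \widehat{E}_{2}\}^{\perp}$; moreover each $R_{\beta}$ preserves $|W|^{2}$, the total kinetic energy. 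Next I would address regularity: a standard mollification and averaging argument shows that a measurable solution of the functional equation may be taken smooth without loss of generality, legitimising the differentiations used in the sequel.

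The heart of the argument is to show that $G_{(\vartheta, \ov{\vartheta})}$ acts transitively on the unit sphere of $U$. Products $R_{\beta_{1}}R_{\beta_{2}}$ of two reflections are rotations through twice the angle between $\widehat{\nu}_{\beta_{1}}$ and $\widehat{\nu}_{\beta_{2}}$ in the plane they span, so it suffices to prove that the curve $\psi\mapsto\widehat{\nu}_{(\vartheta, \ov{\vartheta}, \psi)}$ is sufficiently non-degenerate. It is here that the real-analyticity and strict convexity of $\partial\mathsf{P}_{\ast}$, together with the two reflection symmetries $(I-2e\otimes e)\mathsf{P}_{\ast}=\mathsf{P}_{\ast}$ and $(I-2e^{\perp}\otimes e^{\perp})\mathsf{P}_{\ast}=\mathsf{P}_{\ast}$, enter: the symmetries furnish additional reflections relating distinct values of $\psi$ and of the orientation parameters, enlarging the generated group, while analyticity prevents the normal curve from being confined to a proper subspace unless it is constant. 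Granting transitivity, $\Phi(\cdot; \vartheta, \ov{\vartheta})$ is constant on each sphere of $U$, hence a function of $\widehat{E}_{1}\cdot W$, $\widehat{E}_{2}\cdot W$ (the two momentum components) and $|W|^{2}$ alone.

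Finally I would exploit the additive structure $\Phi=\phi\oplus\phi$: a function of the total momentum and total energy that is simultaneously a sum of a per-particle function of particle one and the same per-particle function of particle two must be affine in the momentum and linear in the energy. This is the classical Cauchy-type reduction, carried out on the constraint manifold cut out by fixing $m(v+\ov{v})$ and the total energy; it forces $\phi(v, \omega, \vartheta)$ to depend on $(v, \omega)$ only through $v$ (linearly) and $m|v|^{2}+J\omega^{2}$ (linearly), with no free linear-in-$\omega$ term, the latter reflecting that the conserved direction $\widehat{\Gamma}_{\beta}$ depends on $\psi$ and so no single linear functional of $\omega$ survives invariance under the whole family. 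Varying the orientation pair $(\vartheta, \ov{\vartheta})$ and using that $\phi$ is a single-particle function then forces the coefficients $b\in\mathbb{R}^{2}$ and $c\in\mathbb{R}$ to be independent of $\vartheta$, all residual orientation dependence being absorbed into the additive term $a(\vartheta)$. The main obstacle is the transitivity step: establishing that the one-parameter reflection family, enriched by the symmetries of $\mathsf{P}_{\ast}$, generates a group acting transitively on $\mathbb{S}^{3}\subset U$, which demands a careful quantitative study of how the contact data $n_{\beta}, p_{\beta}, q_{\beta}$ — and hence $\widehat{\nu}_{\beta}$ — move with $\psi$.
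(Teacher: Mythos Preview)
The paper does not actually prove this theorem: it is quoted from \textsc{Saint-Raymond and Wilkinson} \cite{lsrmw} (``In \cite{lsrmw}, the following result was proved''), and no argument is supplied here. Consequently there is no proof in the present paper against which to compare your proposal.

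That said, the paper does describe the method of \cite{lsrmw} in enough detail to see that your outline is faithful to it. The discussion just before the theorem explains that one rewrites the invariant identity as $\Phi(\sigma_{\beta}[V];\vartheta,\ov{\vartheta})=\Phi(V;\vartheta,\ov{\vartheta})$ and then ``looks to characterise all scalar invariants of the group $G_{(\vartheta,\ov{\vartheta})}\subseteq\mathrm{O}(6)$ generated algebraically by the set $\{\sigma_{\beta}:\psi\in\mathbb{S}^{1}\}$''; moreover the paper stresses that ``the method of proof in \cite{lsrmw} makes crucial use of the fact that each scattering matrix \eqref{berp} is conjugate, by the mass-inertia matrix $M$, to a reflection in $\mathrm{O}(6)$.'' Your passage to $W=MV$, your identification of the fixed subspace $\mathrm{span}\{\widehat{E}_{1},\widehat{E}_{2}\}$, and your reduction of the problem to a transitivity statement for the reflection group on the sphere in the four-dimensional complement $U$ are exactly this strategy. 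You also correctly flag the transitivity step as the crux and the place where the hypotheses on $\mathsf{P}_{\ast}$ (real-analytic boundary, the two reflection symmetries) are used.

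Your sketch is therefore consistent with the cited approach, but it remains a sketch: the transitivity claim (that the curve $\psi\mapsto\widehat{\nu}_{(\vartheta,\ov{\vartheta},\psi)}$, augmented by the discrete symmetries of $\mathsf{P}_{\ast}$, generates a subgroup of $\mathrm{O}(4)$ acting transitively on $\mathbb{S}^{3}$) is asserted rather than argued, and the Cauchy-type separation of the additive structure $\Phi=\phi\oplus\phi$ into the displayed form --- in particular the elimination of any linear-in-$\omega$ contribution --- is only indicated. These are precisely the two places where genuine work is required; since the paper defers the proof to \cite{lsrmw}, a full assessment would require consulting that reference.
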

As a quick consequence of this result, it follows that when one chooses to build the collision operator \eqref{collster} using the canonical scattering family $\{\sigma_{\beta}\}_{\beta\in\mathbb{T}^{3}}$ with $\sigma_{\beta}$ as in \eqref{berp} above, all Maxwellia of the associated Boltzmann equation are of the shape
\begin{equation}
M(v, \omega, \vartheta):=\frac{1}{|\mathbb{S}^{1}|}\frac{m}{2\pi \Theta}\sqrt{\frac{J}{2\pi \Theta}}e^{-\frac{m|v-u|^{2}+J\omega^{2}}{\Theta}},
\end{equation}
for some constant $u\in\mathbb{R}^{2}$ and $\Theta>0$. However, the method of proof in \cite{lsrmw} makes crucial use of the fact that each scattering matrix \eqref{berp} is conjugate, by the mass-inertia matrix $M$, to a reflection in $\mathrm{O}(6)$. As such, for every other family of canonical physical scattering matrices derived in section \ref{maspy} above, the analogue of Theorem \ref{olestuff} is unknown.

Why ought one care about this observation that there are uncountably-many physical scattering matrices? From a mathematical point of view, this problem is connected to the characterisation of classical solutions of the Monge-Amp\`{e}re equation
\begin{equation}
\mathrm{det}\,D^{2}\sigma=\mathrm{const.} \quad \text{on}\hspace{2mm}\mathbb{R}^{M}
\end{equation}
for $M\geq 2$, when the map $\sigma$ is specified to be (i) of class $\mathscr{C}^{k}$, (ii) of class $\mathscr{C}^{\infty}$, or (iii) of class $\mathscr{C}^{\omega}$. To the knowledge of the author, this has remained an open problem since the work of \textsc{Cheng and Yau} \cite{cheng1986complete}. This problem is also connected to the (perhaps, somewhat surprising) absence of a general existence theory for the IBVP for \eqref{nemo} when $\partial\mathsf{P}_{\ast}$ is only of class $\mathscr{C}^{\infty}$ (as opposed to $\mathscr{C}^{\omega}$). Moreover, collision invariants are directly linked with relaxation to equilibrium in kinetic equations (see \textsc{Desvillettes and Villani} \cite{desvillettes2005trend}) and also to hydrodynamic limits thereof (see \textsc{Saint-Raymond} \cite{saint2009hydrodynamic}). From a physical point of view, it is important to know whether or not the Boltzmann equation is a {\em universal} equation -- in the class of all compact, strictly-convex sets -- for the average dynamics of rarefied particle systems. Given its connection to questions in both mathematics and physics, we believe the observation herein warrants attention in the future.
\subsubsection*{Acknowledgements}
I would like to extend my sincere thanks and gratitude to Gilles Francfort and to Patrick Ballard for several extended conversations and encouragement during the preparation of this manuscript.

\appendix

\bibliography{euler}

\end{document}